\providecommand{\U}[1]{\protect\rule{.1in}{.1in}}
\newtheorem{theorem}{Theorem}
\newtheorem{corollary}[theorem]{Corollary}
\newtheorem{definition}[theorem]{Definition}
\newtheorem{lemma}[theorem]{Lemma}
\newtheorem{proposition}[theorem]{Proposition}
\newtheorem{remark}[theorem]{Remark}
\newenvironment{proof}[1][Proof]{\noindent\textbf{#1.} }{\ \rule{0.5em}{0.5em}}
\numberwithin{equation}{section}
\begin{document}

\title{\textbf{Union bound for quantum information processing}}
\author{Samad Khabbazi Oskouei\thanks{Department of Mathematics, Islamic Azad
University, Varamin-Pishva Branch, 33817-7489, Iran}
\and Stefano Mancini\thanks{School of Science and Technology, University of
Camerino, Via M.~delle Carceri 9, I-62032 Camerino, Italy \& INFN--Sezione
Perugia, Via A.~Pascoli, I-06123 Perugia, Italy}
\and Mark M.~Wilde\thanks{Hearne Institute for Theoretical Physics, Department of
Physics and Astronomy, Center for Computation and Technology, Louisiana State
University, Baton Rouge, Louisiana 70803, USA}}
\maketitle

\begin{abstract}
In this paper, we prove a quantum union bound that is relevant when performing
a sequence of binary-outcome quantum measurements on a quantum state. The
quantum union bound proved here involves a tunable parameter that can be
optimized, and this tunable parameter plays a similar role to a parameter
involved in the Hayashi-Nagaoka inequality [\textit{IEEE Trans.~Inf.~Theory},
49(7):1753 (2003)], used often in quantum information theory when analyzing
the error probability of a square-root measurement. An advantage of the proof
delivered here is that it is elementary, relying only on basic properties of
projectors, the Pythagorean theorem, and the Cauchy--Schwarz inequality. As a
non-trivial application of our quantum union bound, we prove that a sequential
decoding strategy for classical communication over a quantum channel achieves
a lower bound on the channel's second-order coding rate. This demonstrates the
advantage of our quantum union bound in the non-asymptotic regime, in which a
communication channel is called a finite number of times. We expect that the
bound will find a range of applications in quantum communication theory,
quantum algorithms, and quantum complexity theory.

\end{abstract}

\section{Introduction}

The union bound, alternatively known as Boole's inequality, represents one of
the simplest yet non-trivial methods for bounding the probability that either
one event or another occurs, in terms of the probabilities of the individual
events (see, e.g., \cite{R77}). By induction, the bound applies to the union
of multiple events, and it often provides a good enough bound in a variety of
applications whenever the probabilities of the individual events are small
relative to the number of events. Concretely, given a finite set
$\{A_{i}\}_{i=1}^{L}$ of events, the union bound is the following inequality:%
\begin{equation}
\Pr\!\left\{  \bigcup\limits_{i=1}^{L}A_{i}\right\}  \leq\sum_{i=1}^{L}%
\Pr\{A_{i}\}.
\end{equation}
By applying DeMorgan's law and basic rules of probability theory, we can
rewrite the union bound such that it applies to the probability that an
intersection of events does not occur%
\begin{equation}
1-\Pr\left\{  \bigcap\limits_{i=1}^{L}A_{i}\right\}  \leq\sum_{i=1}^{L}%
\Pr\{A_{i}^{c}\},
\end{equation}
and this is the form in which it is typically employed in applications.
Recently, the union bound has been listed as the second step to try when
attempting to \textquotedblleft upper-bound the probability of something
bad,\textquotedblright\ with the first step being to determine if the trivial
bound of one is reasonable in a given application \cite{A18}.

Generalizing the union bound to a quantum-mechanical setup is non-trivial. A
natural setting in which we would consider this generalization is when the
goal is to bound the probability that two or more successive measurement
outcomes do not occur. Concretely, suppose that the state of a quantum system
is given by a density operator $\rho$. Suppose that there are $L$ projective
quantum measurements $\{P_{i},I-P_{i}\}$ for $i\in\{1,\ldots,L\}$, where
$P_{i}$ is a projector, thus satisfying $P_{i}=P_{i}^{\dag}$ and $P_{i}%
=P_{i}^{2}$ by definition. Suppose that the first measurement is performed,
followed by the second measurement, and so on. If the projectors $P_{1}%
,\ldots,P_{L}$ commute, then the probability that the outcomes $P_{1}%
,\ldots,P_{L}$ do not occur is calculated by applying the Born rule and can be
bounded as%
\begin{equation}
1-\operatorname{Tr}\{P_{L}P_{L-1}\cdots P_{1}\rho P_{1}\cdots P_{L-1}%
\}\leq\sum_{i=1}^{L}\operatorname{Tr}\{(I-P_{i})\rho\},
\label{eq:classical-ub}%
\end{equation}
with the bound following essentially from an application of the union bound.
However, if the projectors $P_{1},\ldots,P_{L}$ do not commute, then classical
reasoning does not apply and alternative methods are required.

Recently, Gao proved a quantum union bound \cite{G15}\ that has been useful in
a variety of applications, including quantum communication theory
\cite{G15,H15,BCN16,R17,YSM17}, quantum algorithms \cite{MW16,A16,HYM17},
quantum complexity theory \cite{A16,WHKN18}, and Hamiltonian complexity theory
\cite{AAV16,KL18}. Given an arbitrary set of projectors $\{P_{i}\}_{i=1}^{L}$,
each corresponding to one outcome of a binary-valued measurement, Gao's
quantum union bound is the following inequality \cite[Theorem 1]{G15}:%
\begin{equation}
1-\operatorname{Tr}\{P_{L}P_{L-1}\cdots P_{1}\rho P_{1}\cdots P_{L-1}%
\}\leq4\sum_{i=1}^{L}\operatorname{Tr}\{(I-P_{i})\rho\}. \label{eq:gao-qub}%
\end{equation}
By comparing \eqref{eq:gao-qub} with \eqref{eq:classical-ub}, we notice that
the only difference is the factor of four in \eqref{eq:gao-qub}. The factor of
four is inconsequential for many applications, but nevertheless, it is natural
to wonder whether this bound can be improved. Furthermore, at least one
application in which improving the factor of four does make a difference is in
the context of whether a sequential decoding strategy can be used to achieve
the second-order coding rate for classical communication---we discuss this
application in more detail later.

\section{Summary of results}

In this paper, we prove the following quantum union bound:

\begin{theorem}
[Quantum union bound]\label{thm:main-result}Let $\rho$ be a density operator
acting on a separable Hilbert space $\mathcal{H}$, let $\{P_{i}\}_{i=1}^{L}$
be an arbitrary set of projectors, each acting on $\mathcal{H}$, and let $c>0$
be an arbitrary positive constant. Then%
\begin{multline}
1-\operatorname{Tr}\{P_{L}P_{L-1}\cdots P_{1}\rho P_{1}\cdots P_{L-1}%
\}\leq\left(  1+c\right)  \operatorname{Tr}\{\left(  I-P_{L}\right)  \rho\}\\
+\left(  2+c+c^{-1}\right)  \sum_{i=2}^{L-1}\operatorname{Tr}\{(I-P_{i}%
)\rho\}+\left(  2+c^{-1}\right)  \operatorname{Tr}\{\left(  I-P_{1}\right)
\rho\}.
\end{multline}

\end{theorem}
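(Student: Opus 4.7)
The plan is to reduce to pure states, decompose via the Pythagorean theorem, bound each resulting term by a Young-type inequality with an asymmetric treatment at the endpoints, and close using a second Pythagorean identity that produces exactly the coefficients of the theorem. First, by the spectral decomposition $\rho = \sum_k p_k |\psi_k\rangle\langle\psi_k|$ and linearity of both sides in $\rho$, it suffices to consider pure $\rho = |\psi\rangle\langle\psi|$. Using $P_L^2 = P_L$ and cyclicity of trace, the left-hand side becomes $1 - \|T_L|\psi\rangle\|^2$ with $T_i := P_i P_{i-1} \cdots P_1$ and $T_0 := I$. Setting $v_i := (I - P_i) T_{i-1} |\psi\rangle$, the orthogonal decomposition $T_{i-1}|\psi\rangle = T_i|\psi\rangle + v_i$ (the two summands lie in the complementary ranges of $P_i$ and $I-P_i$) and telescoping give $1 - \|T_L|\psi\rangle\|^2 = \sum_{i=1}^L \|v_i\|^2$.

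Next, I would bound each $\|v_i\|^2$ by writing $v_i = (I-P_i)|\psi\rangle - (I-P_i)(|\psi\rangle - T_{i-1}|\psi\rangle)$ and applying the Young-type inequality $\|a-b\|^2 \le (1+c)\|a\|^2 + (1+c^{-1})\|b\|^2$, which is one Cauchy--Schwarz estimate on the cross term. Setting $B_i := \operatorname{Tr}\{(I-P_i)\rho\}$ and $Y_i := \|(I-P_i)(|\psi\rangle - T_{i-1}|\psi\rangle)\|^2$, this gives $\|v_i\|^2 \le (1+c) B_i + (1+c^{-1}) Y_i$. I would treat the endpoints asymmetrically: $\|v_1\|^2 = B_1$ directly (since $T_0 = I$), while for $i = L$ I replace $Y_L$ by the looser quantity $F_{L-1} := \||\psi\rangle - T_{L-1}|\psi\rangle\|^2 \ge Y_L$ (using that $I-P_L$ is a contraction), obtaining $\|v_L\|^2 \le (1+c) B_L + (1+c^{-1}) F_{L-1}$.

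Third, I would close the argument using a second Pythagorean identity. The decomposition $|\psi\rangle - T_i|\psi\rangle = (I-P_i)|\psi\rangle + P_i(|\psi\rangle - T_{i-1}|\psi\rangle)$ is orthogonal, so Pythagoras gives $F_i = B_i + F_{i-1} - Y_i$ for every $i$, where $F_i := \||\psi\rangle - T_i|\psi\rangle\|^2$. Telescoping with $F_0 = 0$ and $Y_1 = 0$ yields the two identities $\sum_{i=1}^L Y_i = \sum_{i=1}^L B_i - F_L$ and $F_{L-1} = F_L - B_L + Y_L$. Summing the per-$i$ bounds, the combination $(1+c^{-1})[\sum_{i=2}^{L-1} Y_i + F_{L-1}]$ collapses via these identities to $(1+c^{-1})[B_1 + \sum_{i=2}^{L-1} B_i]$, and collecting all contributions produces coefficients $(2+c^{-1})$ for $B_1$, $(2+c+c^{-1})$ for each middle $B_i$, and $(1+c)$ for $B_L$, exactly as stated.

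The main obstacle I expect is identifying this asymmetric setup -- using the tight $Y_i$-bound for middle indices but the looser $F_{L-1}$ bound only at $i = L$ -- so that the Pythagorean identity $F_i = B_i + F_{i-1} - Y_i$ produces a telescoping cancellation that eliminates any $L$-dependence and delivers precisely the three stated coefficients; naively using $F_{i-1}$ at every index, or attempting an induction on $L$ with the coarse bound $F_{i-1} \le \sum_{j<i} B_j$, overcounts each middle $B_i$ and each $B_1$ by an extra $(1+c^{-1})$ factor.
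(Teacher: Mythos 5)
Your proof is correct, and it takes a genuinely different route from the paper's, even though both are elementary and draw on the same toolkit (Pythagoras, Cauchy--Schwarz, and $2xy\le cx^{2}+c^{-1}y^{2}$). The paper proves the pure-state statement (Theorem~\ref{T1}) by starting from $1-\Vert P_{L}\cdots P_{1}|\psi\rangle\Vert^{2}$, adding and subtracting $2\bigl(1-\sqrt{\langle P_{L}\rangle}\sqrt{\langle P_{1}\cdots P_{L}\cdots P_{1}\rangle}\bigr)$, invoking the square-root inequality \eqref{pro4} of Lemma~\ref{L1}, and then combining the triangle inequality, Young's inequality, and the separate Lemma~\ref{L3}, whose own proof needs further Cauchy--Schwarz and completing-the-square manipulations. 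You instead stay at the level of squared norms throughout: your master identity $1-\Vert T_{L}|\psi\rangle\Vert^{2}=\sum_{i}\Vert v_{i}\Vert^{2}$ is the paper's \eqref{pro3} used directly, each term is handled by one Young-type estimate on the decomposition $v_{i}=Q_{i}|\psi\rangle-Q_{i}(I-T_{i-1})|\psi\rangle$, and the exact recursion $F_{i}=B_{i}+F_{i-1}-Y_{i}$, together with the single contraction bound $Y_{L}\le F_{L-1}$, cancels the $Y_{i}$'s exactly; in fact your identities reprove Lemma~\ref{L3}, since $\sum_{i=2}^{L}Y_{i}\le\sum_{i=2}^{L-1}Y_{i}+F_{L-1}=\sum_{i=1}^{L-1}B_{i}$, by a shorter argument. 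What your organization buys is complete avoidance of the square-root quantities $\sqrt{\langle P_{L}\rangle}\sqrt{\langle P_{1}\cdots P_{L}\cdots P_{1}\rangle}$ and of the completing-the-square steps the paper uses (one in the main proof, one inside Lemma~\ref{L3}) to keep the $B_{L}$ prefactor at $1+c$; the endpoint asymmetry you flag---the exact $\Vert v_{1}\Vert^{2}=B_{1}$ at $i=1$, the tight $Y_{i}$ bound in the middle, and the looser $F_{L-1}$ bound only at $i=L$---is precisely what delivers the prefactors $2+c^{-1}$, $2+c+c^{-1}$, and $1+c$, and your coefficient accounting checks out. Both arguments then conclude identically, by averaging over the spectral decomposition of $\rho$.
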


Our proof of the above theorem is elementary, relying only on basic properties
of projectors, the Pythagorean theorem, and the Cauchy--Schwarz inequality.
Furthermore, the theorem directly applies to states of infinite-dimensional
quantum systems and can thus be employed to analyze practical situations
involving not only qubits but also bosonic quantum systems \cite{S17}. Similar
to the classical case discussed in the introduction, the quantum union bound
of Theorem~\ref{thm:main-result} provides a useful bound when the individual
probabilities $\operatorname{Tr}\{(I-P_{i})\rho\}$ are small relative to the
number~$L$ of them, and this scenario occurs, for example, in the application
to communication presented in Section~\ref{sec:ea-comm}. Furthermore, the
tunable parameter $c>0$ is a significant advantage of our quantum union bound,
and it is essential in the application mentioned above, in which it really is
necessary for $c>0$ to be decreasing with the number of channel uses so that
the prefactor in front of the term $\operatorname{Tr}\{\left(  I-P_{L}\right)
\rho\}$ is as close to one as possible. More generally, one could certainly
take an infimum over the parameter $c>0$ in any given application in order to
have the upper bound be as tight as possible.

Our quantum union bound represents a strict improvement over that of Gao's in
\eqref{eq:gao-qub}. Indeed, by setting $c=1$ and then loosening the above
bound further, we recover Gao's. Our quantum union bound can also be compared
with the Hayashi--Nagaoka (HN) inequality from \cite[Lemma~2]{HN03}, which is
often used to analyze the error probability of the square-root measurement.
The HN inequality also features a tunable parameter $c>0$, and this is one of
the main reasons why quantum information theory has recently advanced in the
direction of characterizing second-order asymptotics for communication tasks
\cite{TH12,TT13,datta2015second,DTW14,WRG15,TBR15,L16,WTB16,wilde2017position}%
. Our quantum union bound provides essentially the same trade-off given by the
HN\ inequality, but just slightly improved, in the sense that the prefactor
for the term $\operatorname{Tr}\{\left(  I-P_{L}\right)  \rho\}$ is $1+c$,
while the prefactor for $L-2$ other terms is $2+c+c^{-1}$ and the prefactor
for the term $\operatorname{Tr}\{\left(  I-P_{1}\right)  \rho\}$ is $2+c^{-1}%
$, the last prefactor representing the improvement.

In the previous paragraphs, we focused exclusively on the comparison of
Theorem~\ref{thm:main-result} with Gao's bound in \eqref{eq:gao-qub}. However,
there were other works that preceded Gao's, which we recall now. \cite{A06}
established a quantum union bound, with applications in quantum complexity
theory. \cite{PhysRevA.85.012302} analyzed the error probability of a
sequential decoding strategy and proved that it can achieve the Holevo
information of a quantum channel for classical communication. The work of
\cite{PhysRevA.85.012302} then inspired \cite{S11}, who established another
quantum union bound (also called \textquotedblleft non-commutative union
bound\textquotedblright) of the following form:%
\begin{equation}
1-\operatorname{Tr}\{P_{L}P_{L-1}\cdots P_{1}\rho P_{1}\cdots P_{L-1}%
\}\leq2\sqrt{\sum_{i=1}^{L}\operatorname{Tr}\{(I-P_{i})\rho\}}.
\label{eq:sen-bnd}%
\end{equation}
\cite{Wilde20130259} subsequently generalized the result of \cite{S11} beyond
projectors, such that it would hold for a set of operators $\{\Lambda
_{i}\}_{i=1}^{L}$, each of which satisfies $0\leq\Lambda_{i}\leq I$. Then
Gao's bound in \eqref{eq:gao-qub} appeared after \cite{Wilde20130259}.
Clearly, Gao's bound was a significant improvement over \eqref{eq:sen-bnd},
eliminating the square root at the cost of a doubling of the prefactor.

To demonstrate an application in which Theorem~\ref{thm:main-result} is
useful, we show how a sequential decoding strategy achieves a lower bound on
the second-order coding rate for classical communication over a quantum
channel. We consider the cases in which there is entanglement assistance\ as
well as no assistance, and our result here also covers the important case when
the channel takes input density operators acting on a separable Hilbert space
to output density operators acting on a separable Hilbert space. An advantage
of our proof is that it is arguably simpler than other approaches that could
be taken to solve this problem, relying on a method called position-based
coding \cite{anshu2017one}, as well as sequential decoding
\cite{PhysRevA.85.012302,S11,Wilde20130259}, and an error analysis that uses
Theorem~\ref{thm:main-result}. Our proof can be compared with the proof from
\cite{H03,H04}, in which it was shown how to achieve the capacity for
energy-constrained classical communication (i.e., the first-order coding
rate), and we advocate here that our proof is considerably simpler.

We organize the rest of our paper as follows. In Section~\ref{sec:proof}, we
provide a proof of Theorem~\ref{thm:main-result}.
In Section~\ref{sec:gen-to-POVMs}, we consider the generalization of
Theorem~\ref{thm:main-result} to positive operator-valued measures (POVMs).
Section~\ref{sec:ea-comm} discusses the application to obtaining a lower bound
on the second-order coding rate for classical communication. In
Section~\ref{sec:concl}, we conclude with a summary and discuss some open
directions for future research.

\section{Proof of Theorem~\ref{thm:main-result}}

\label{sec:proof}

We prove our main result, Theorem~\ref{thm:main-result}, by establishing the
following more general result:

\begin{theorem}
\label{T1}Let $\mathcal{H}$ be a separable Hilbert space, let $|\psi\rangle
\in\mathcal{H}$, let $\{P_{i}\}_{i=1}^{L}$ be a finite set of projectors
acting on $\mathcal{H}$, and let $c>0$. Then%
\begin{multline}
\left\Vert |\psi\rangle\right\Vert _{2}^{2}-\left\Vert P_{L}P_{L-1}\cdots
P_{1}|\psi\rangle\right\Vert _{2}^{2}\leq(1+c)\left\Vert \left(
I-P_{L}\right)  |\psi\rangle\right\Vert _{2}^{2}\label{asbound}\\
+(2+c+c^{-1})\sum_{i=2}^{L-1}\left\Vert \left(  I-P_{i}\right)  |\psi
\rangle\right\Vert _{2}^{2}+(2+c^{-1})\left\Vert \left(  I-P_{1}\right)
|\psi\rangle\right\Vert _{2}^{2}.
\end{multline}

\end{theorem}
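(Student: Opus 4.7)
The plan is to first reduce the left-hand side to a telescoping sum of squared norms via the Pythagorean theorem, and then bound each summand with the AM--GM-type inequality $\|A+B\|_2^2 \leq (1+c)\|A\|_2^2 + (1+c^{-1})\|B\|_2^2$ (which is the source of the tunable $c$). Setting $|\psi_i\rangle := P_i P_{i-1}\cdots P_1 |\psi\rangle$ and $|\phi_i\rangle := (I-P_i)|\psi_{i-1}\rangle$, Pythagoras applied to $|\psi_{i-1}\rangle = P_i|\psi_{i-1}\rangle + (I-P_i)|\psi_{i-1}\rangle$ gives $\|\psi_{i-1}\|_2^2 = \|\psi_i\|_2^2 + \|\phi_i\|_2^2$, and summing telescopes to $\||\psi\rangle\|_2^2 - \|P_L\cdots P_1|\psi\rangle\|_2^2 = \sum_{i=1}^L \|\phi_i\|_2^2$. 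The term $i=1$ is already $\|(I-P_1)|\psi\rangle\|_2^2$. For $i\geq 2$ I would write $|\phi_i\rangle = (I-P_i)|\psi\rangle - (I-P_i)(|\psi\rangle - |\psi_{i-1}\rangle)$ and apply the $(1+c)(1+c^{-1})$ inequality to obtain
\begin{equation*}
\|\phi_i\|_2^2 \leq (1+c)\|(I-P_i)|\psi\rangle\|_2^2 + (1+c^{-1})\|(I-P_i)(|\psi\rangle - |\psi_{i-1}\rangle)\|_2^2.
\end{equation*}

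The main obstacle is bounding the sum of the ``error'' terms $\sum_{i=2}^L \|(I-P_i)(|\psi\rangle-|\psi_{i-1}\rangle)\|_2^2$ by $\sum_{i=1}^{L-1}\|(I-P_i)|\psi\rangle\|_2^2$; this is precisely what is needed to produce the asymmetric coefficients claimed in the theorem. I plan to handle this with a second, more refined use of Pythagoras. The key observation is the orthogonal decomposition
\begin{equation*}
|\psi\rangle - |\psi_i\rangle = (I-P_i)|\psi\rangle + P_i(|\psi\rangle - |\psi_{i-1}\rangle),
\end{equation*}
whose two summands lie in the mutually orthogonal ranges of $I-P_i$ and $P_i$, so $\||\psi\rangle-|\psi_i\rangle\|_2^2 = \|(I-P_i)|\psi\rangle\|_2^2 + \|P_i(|\psi\rangle-|\psi_{i-1}\rangle)\|_2^2$. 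Combined with Pythagoras on $|\psi\rangle-|\psi_{i-1}\rangle$ (splitting via $P_i+(I-P_i)=I$) this yields the pointwise identity
\begin{equation*}
\|(I-P_i)(|\psi\rangle-|\psi_{i-1}\rangle)\|_2^2 = \||\psi\rangle-|\psi_{i-1}\rangle\|_2^2 - \||\psi\rangle-|\psi_i\rangle\|_2^2 + \|(I-P_i)|\psi\rangle\|_2^2,
\end{equation*}
and summing over $i=2,\ldots,L$ makes the first two terms telescope down to $\|(I-P_1)|\psi\rangle\|_2^2 - \||\psi\rangle-|\psi_L\rangle\|_2^2$. The very same orthogonal decomposition at $i=L$ gives $\||\psi\rangle-|\psi_L\rangle\|_2^2 \geq \|(I-P_L)|\psi\rangle\|_2^2$, which absorbs the $\|(I-P_L)|\psi\rangle\|_2^2$ contribution and leaves exactly $\sum_{i=1}^{L-1}\|(I-P_i)|\psi\rangle\|_2^2$ as the desired upper bound on the error sum.

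Assembling everything, the stand-alone $\|\phi_1\|_2^2 = \|(I-P_1)|\psi\rangle\|_2^2$ piece plus the $(1+c^{-1})$-weighted error bound adds an extra $(1+c^{-1})$ to the coefficient of every $\|(I-P_i)|\psi\rangle\|_2^2$ with $i\leq L-1$, while the $(1+c)$-weighted contributions from the individual bounds for $i\geq 2$ attach $(1+c)$ to every $\|(I-P_i)|\psi\rangle\|_2^2$ with $i\geq 2$. The totals reproduce exactly the coefficients $(1+c)$, $(2+c+c^{-1})$, and $(2+c^{-1})$ stated in \eqref{asbound}. Throughout, the only tools are the Pythagorean theorem for projectors and the $(1+c)(1+c^{-1})$ Cauchy--Schwarz/AM--GM inequality, matching the "elementary" proof advertised in the introduction.
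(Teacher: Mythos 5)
Your proposal is correct, and every step checks out: the telescoping Pythagorean identity $\|\,|\psi_{i-1}\rangle\|_2^2=\|\,|\psi_i\rangle\|_2^2+\|(I-P_i)|\psi_{i-1}\rangle\|_2^2$, the weighted triangle inequality $\|A+B\|_2^2\leq(1+c)\|A\|_2^2+(1+c^{-1})\|B\|_2^2$, the pointwise identity for $\|(I-P_i)(|\psi\rangle-|\psi_{i-1}\rangle)\|_2^2$, and the second telescoping sum all hold, and the bookkeeping of coefficients indeed reproduces $(1+c)$, $(2+c+c^{-1})$, and $(2+c^{-1})$. Your route is genuinely different in its key steps from the paper's, even though both arguments funnel through the same intermediate inequality, namely the statement of Lemma~\ref{L3}: $\sum_{i=2}^{L}\|Q_i(I-P_{i-1}\cdots P_1)|\psi\rangle\|_2^2\leq\sum_{i=1}^{L-1}\|Q_i|\psi\rangle\|_2^2$ (your ``error-sum'' bound is exactly this, since $(I-P_i)(|\psi\rangle-|\psi_{i-1}\rangle)=Q_i(I-P_{i-1}\cdots P_1)|\psi\rangle$). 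The paper proves that lemma via the identities \eqref{pro1}--\eqref{pro3}, a Cauchy--Schwarz step, and completing a square, and its main proof further relies on the square-root inequality \eqref{pro4} together with an add-and-subtract manipulation before applying the triangle inequality and $2xy\leq cx^2+c^{-1}y^2$; you instead obtain the lemma from a purely Pythagorean telescoping of $\||\psi\rangle-|\psi_i\rangle\|_2^2$ via the orthogonal decomposition $|\psi\rangle-|\psi_i\rangle=(I-P_i)|\psi\rangle+P_i(|\psi\rangle-|\psi_{i-1}\rangle)$, and you replace the paper's square-root manipulations in the main line by the exact identity $\|\,|\psi\rangle\|_2^2-\|P_L\cdots P_1|\psi\rangle\|_2^2=\sum_{i=1}^{L}\|Q_iP_{i-1}\cdots P_1|\psi\rangle\|_2^2$ followed by a single per-term inequality. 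What your version buys is a shorter and arguably cleaner argument in which Cauchy--Schwarz appears only inside the elementary bound $2\,\mathrm{Re}\langle A|B\rangle\leq c\|A\|_2^2+c^{-1}\|B\|_2^2$ and no square-root quantities ever appear; the paper's lemma-based route, by contrast, isolates the exact identities of Lemma~\ref{L1}, which may be reusable elsewhere. The only cosmetic point worth adding is that your error-sum argument presumes $L\geq2$; the case $L=1$ is immediate since then the left-hand side equals $\|(I-P_1)|\psi\rangle\|_2^2$.
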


Theorem~\ref{thm:main-result} is a direct consequence of Theorem~\ref{T1}.
Indeed, a density operator $\rho$ acting on a separable Hilbert space has a
spectral decomposition as follows:%
\begin{equation}
\rho=\sum_{j\in\mathcal{J}}p_{j}|\psi_{j}\rangle\langle\psi_{j}|,
\end{equation}
where the index set $\mathcal{J}$ is countable, $\{p_{j}\}_{j\in\mathcal{J}}$
is a probability distribution, and $\{|\psi_{j}\rangle\}_{j\in\mathcal{J}}$ is
an orthonormal set of eigenvectors \cite{Hol11}. Applying Theorem~\ref{T1}, we
find that%
\begin{align}
&  1-\operatorname{Tr}\{P_{L}P_{L-1}\cdots P_{1}|\psi_{j}\rangle\langle
\psi_{j}|P_{1}\cdots P_{L-1}\}\nonumber\\
&  =\left\Vert |\psi_{j}\rangle\right\Vert _{2}^{2}-\left\Vert P_{L}%
P_{L-1}\cdots P_{1}|\psi_{j}\rangle\right\Vert _{2}^{2}\\
&  \leq(1+c)\left\Vert \left(  I-P_{L}\right)  |\psi_{j}\rangle\right\Vert
_{2}^{2}+(2+c+c^{-1})\sum_{i=2}^{L-1}\left\Vert \left(  I-P_{i}\right)
|\psi_{j}\rangle\right\Vert _{2}^{2}\nonumber\\
&  \qquad\qquad+\left(  2+c^{-1}\right)  \left\Vert \left(  I-P_{1}\right)
|\psi_{j}\rangle\right\Vert _{2}^{2}\\
&  =\left(  1+c\right)  \operatorname{Tr}\{\left(  I-P_{L}\right)  |\psi
_{j}\rangle\langle\psi_{j}|\}+\left(  2+c+c^{-1}\right)  \sum_{i=2}%
^{L-1}\operatorname{Tr}\{(I-P_{i})|\psi_{j}\rangle\langle\psi_{j}%
|\}\nonumber\\
&  \qquad\qquad\left(  2+c^{-1}\right)  +\operatorname{Tr}\{\left(
I-P_{1}\right)  |\psi_{j}\rangle\langle\psi_{j}|\}.
\end{align}
The reduction from Theorem~\ref{thm:main-result} to Theorem~\ref{T1} follows
by averaging over the distribution $\{p_{j}\}_{j\in\mathcal{J}}$.

So now we shift our focus to proving Theorem~\ref{T1}, and we do so with the
aid of several lemmas. To simplify the notation, hereafter we employ the
following shorthand:%
\begin{align}
\left\Vert \cdots\right\Vert  &  \equiv\left\Vert \cdots|\psi\rangle
\right\Vert _{2},\label{eq:shorthand-1}\\
\langle\cdots\rangle &  \equiv\langle\psi|\cdots|\psi\rangle,\\
Q_{i}  &  \equiv I-P_{i}. \label{eq:q_i}%
\end{align}
The convention we take with the shorthand $\langle A\rangle$ for a
non-Hermitian operator $A$ is that $\langle A\rangle= \langle\psi\vert
\varphi\rangle$ where $\vert\varphi\rangle= A\vert\psi\rangle$. Furthermore,
we also assume without loss of generality that the vector $|\psi\rangle$ in
Theorem~\ref{T1} is a unit vector. Clearly, this assumption can be easily
released by scaling the resulting inequality by an arbitrary positive number.

First recall that, due to the idempotence of projectors, we have the following
identities holding for all $i\in\{1,2,\ldots,L\}$:%
\begin{equation}
\langle Q_{i}P_{i-1}\cdots P_{1}\rangle=\langle Q_{i}Q_{i}P_{i-1}\cdots
P_{1}\rangle,\qquad\langle P_{1}\cdots P_{i}\rangle=\langle P_{1}\cdots
P_{i}P_{i}\rangle, \label{projection}%
\end{equation}
under the convention that $P_{i-1}\cdots P_{1}=P_{1}\cdots P_{i-1}=I$ for
$i=1$.

\begin{lemma}
\label{L1}For a set $\{P_{i}\}_{i=1}^{L}$ of projectors acting on a separable
Hilbert space $\mathcal{H}$, a unit vector $|\psi\rangle\in\mathcal{H}$, and
employing the shorthand in \eqref{eq:shorthand-1}--\eqref{eq:q_i}, we have the
following identities:%
\begin{align}
\sum_{i=1}^{L}\langle Q_{i}P_{i-1}\cdots P_{1}\rangle &  =1-\langle
P_{L}\cdots P_{1}\rangle,\label{pro1}\\
\sum_{i=1}^{L}\langle P_{1}\cdots P_{i-1}Q_{i}\rangle &  =1-\langle
P_{1}\cdots P_{L}\rangle,\label{pro2}\\
\sum_{i=1}^{L}\langle P_{1}\cdots P_{i-1}Q_{i}P_{i-1}\cdots P_{1}\rangle &
=1-\langle P_{1}\cdots P_{L}\cdots P_{1}\rangle,\label{pro3}\\
1-\sqrt{\langle P_{L}\rangle}\sqrt{\langle P_{1}\cdots P_{L}\cdots
P_{1}\rangle}  &  \leq\sum_{i=1}^{L}\sqrt{\langle Q_{i}\rangle}\sqrt{\langle
P_{1}\cdots P_{i-1}Q_{i}P_{i-1}\cdots P_{1}\rangle}, \label{pro4}%
\end{align}
under the convention that $P_{i-1}\cdots P_{1}=P_{1}\cdots P_{i-1}=I$ for
$i=1$.
\end{lemma}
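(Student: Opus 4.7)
The plan is to handle (\ref{pro1})--(\ref{pro3}) by elementary telescoping and to tackle the inequality (\ref{pro4}) via a Cauchy--Schwarz argument applied to a telescoping decomposition of $|\psi\rangle$. For each of (\ref{pro1})--(\ref{pro3}), I would expand $Q_i = I - P_i$ so that the $i$-th summand splits into a difference of two nested expectations, and then observe that the resulting sum telescopes. Concretely: in (\ref{pro1}) the term $\langle Q_i P_{i-1}\cdots P_1\rangle$ equals $\langle P_{i-1}\cdots P_1\rangle - \langle P_i P_{i-1}\cdots P_1\rangle$, so with the convention $P_0\cdots P_1 = I$ the sum collapses to $1 - \langle P_L\cdots P_1\rangle$; (\ref{pro2}) is the mirror-image telescoping on the right; and (\ref{pro3}) additionally uses the idempotence $P_{i-1}^2 = P_{i-1}$ to identify $\langle P_1\cdots P_{i-1}P_{i-1}\cdots P_1\rangle$ with $\langle P_1\cdots P_{i-1}\cdots P_1\rangle$, after which the same telescoping applies.

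For (\ref{pro4}) I would introduce the sequentially projected vectors $|\psi_j\rangle := P_j\cdots P_1|\psi\rangle$ (with $|\psi_0\rangle = |\psi\rangle$) and use the one-step decomposition $|\psi_{i-1}\rangle = P_i|\psi_{i-1}\rangle + Q_i|\psi_{i-1}\rangle = |\psi_i\rangle + Q_i|\psi_{i-1}\rangle$ iteratively to obtain the telescoping identity
\[
|\psi\rangle \;=\; |\psi_L\rangle + \sum_{i=1}^L Q_i|\psi_{i-1}\rangle.
\]
Taking the inner product with $|\psi\rangle$ and bounding each real part by its modulus yields
\[
1 \;\leq\; |\langle\psi|\psi_L\rangle| + \sum_{i=1}^L |\langle\psi|Q_i|\psi_{i-1}\rangle|.
\]
For the first inner product, I would use $P_L^2 = P_L$ to rewrite $\langle\psi|\psi_L\rangle = \langle\psi|P_L^2|\psi_{L-1}\rangle = \langle P_L\psi|\psi_L\rangle$ and then apply Cauchy--Schwarz to bound it by $\|P_L\psi\|\cdot\|\psi_L\| = \sqrt{\langle P_L\rangle}\,\sqrt{\langle P_1\cdots P_L\cdots P_1\rangle}$. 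For each remaining inner product, $Q_i^2 = Q_i$ yields $\langle\psi|Q_i|\psi_{i-1}\rangle = \langle Q_i\psi|Q_i\psi_{i-1}\rangle$, and Cauchy--Schwarz then bounds it by $\sqrt{\langle Q_i\rangle}\,\sqrt{\langle P_1\cdots P_{i-1}Q_i P_{i-1}\cdots P_1\rangle}$. Rearranging gives (\ref{pro4}).

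The main obstacle is (\ref{pro4}): one must spot simultaneously the right telescoping expansion of $|\psi\rangle$ and the asymmetric ``idempotence split'' of each inner product that makes Cauchy--Schwarz produce exactly the prefactors $\sqrt{\langle P_L\rangle}$ and $\sqrt{\langle Q_i\rangle}$ rather than mismatched quantities. A naive split (for instance, pairing $|\psi\rangle$ with the entire $|\psi_L\rangle$ without first exploiting $P_L^2 = P_L$) loses precisely these factors, which are what ultimately drive the sharp constants $1+c$, $2+c+c^{-1}$, and $2+c^{-1}$ in Theorem~\ref{T1}. Once the two geometric ideas are in place, the remainder is routine.
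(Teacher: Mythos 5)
Your proposal is correct and follows essentially the same route as the paper: the telescoping of $Q_i=I-P_i$ gives \eqref{pro1}--\eqref{pro3}, and your vector identity $|\psi\rangle=|\psi_L\rangle+\sum_{i=1}^{L}Q_i|\psi_{i-1}\rangle$ is exactly the paper's expectation-value identity \eqref{rel1} (and its idempotent form \eqref{rel4}) before taking the inner product with $|\psi\rangle$. The subsequent idempotence-plus-Cauchy--Schwarz step producing the factors $\sqrt{\langle Q_i\rangle}$ and $\sqrt{\langle P_L\rangle}$ matches the paper's derivation of \eqref{pro4}.
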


\begin{proof}
The following identities are straightforward:%
\begin{align}
1  &  =\langle Q_{1}\rangle+\langle Q_{2}P_{1}\rangle+\cdots+\langle
Q_{L-1}P_{L-2}\cdots P_{1}\rangle+\langle Q_{L}P_{L-1}\cdots P_{1}%
\rangle+\langle P_{L}P_{L-1}\cdots P_{1}\rangle,\label{rel1}\\
1  &  =\langle Q_{1}\rangle+\langle P_{1}Q_{2}\rangle+\cdots+\langle
P_{1}\cdots P_{L-2}Q_{L-1}\rangle+\langle P_{1}\cdots P_{L-1}Q_{L}%
\rangle+\langle P_{1}\cdots P_{L-1}P_{L}\rangle,\label{rel2}\\
1  &  =\langle Q_{1}\rangle+\langle P_{1}Q_{2}P_{1}\rangle+\cdots+\langle
P_{1}\cdots P_{L-2}Q_{L-1}P_{L-2}\cdots P_{1}\rangle\nonumber\\
&  \qquad\qquad+\langle P_{1}\cdots P_{L-1}Q_{L}P_{L-1}\cdots P_{1}%
\rangle+\langle P_{1}\cdots P_{L-1}P_{L}P_{L-1}\cdots P_{1}\rangle.
\label{rel3}%
\end{align}
Consequently, from the equalities in \eqref{rel1}, \eqref{rel2}, and
\eqref{rel3}, we obtain \eqref{pro1}, \eqref{pro2}, and \eqref{pro3},
respectively. The following equality is a direct consequence of \eqref{rel1}
and \eqref{projection}:%
\begin{multline}
1=\langle Q_{1}\rangle+\langle Q_{2}Q_{2}P_{1}\rangle+\cdots+\langle
Q_{L-1}Q_{L-1}P_{L-2}\cdots P_{1}\rangle+\langle Q_{L}Q_{L}P_{L-1}\cdots
P_{1}\rangle\label{rel4}\\
+\langle P_{L}P_{L}P_{L-1}\cdots P_{1}\rangle.
\end{multline}
By applying the Cauchy-Schwarz inequality to \eqref{rel4}, we find that%
\begin{multline}
1\leq\langle Q_{1}\rangle+\sqrt{\langle Q_{2}\rangle}\sqrt{\langle P_{1}%
Q_{2}P_{1}\rangle}+\cdots+\sqrt{\langle Q_{L}\rangle}\sqrt{\langle P_{1}\cdots
P_{L-1}Q_{L}P_{L-1}\cdots P_{1}\rangle}\\
+\sqrt{\langle P_{L}\rangle}\sqrt{\langle P_{1}\cdots P_{L-1}P_{L}%
P_{L-1}\cdots P_{1}\rangle},
\end{multline}
from which \eqref{pro4} immediately follows.
\end{proof}

\begin{lemma}
\label{L3}For a set $\{P_{i}\}_{i=1}^{L}$ of projectors acting on a separable
Hilbert space $\mathcal{H}$, a unit vector $|\psi\rangle\in\mathcal{H}$, and
employing the shorthand in \eqref{eq:shorthand-1}--\eqref{eq:q_i}, the
following inequality holds for $L\geq2$:%
\begin{equation}
\sum_{i=1}^{L}\left\Vert Q_{i}(I-P_{i-1}\cdots P_{1})\right\Vert ^{2}\leq
\sum_{i=1}^{L-1}\left\Vert Q_{i}\right\Vert ^{2},
\end{equation}
under the convention that $P_{i-1}\cdots P_{1}=P_{1}\cdots P_{i-1}=I$ for
$i=1$. Equivalently,%
\begin{equation}
\sum_{i=2}^{L}\left\Vert Q_{i}(I-P_{i-1}\cdots P_{1})\right\Vert ^{2}\leq
\sum_{i=1}^{L-1}\left\Vert Q_{i}\right\Vert ^{2},
\end{equation}
due to the aforementioned convention.
\end{lemma}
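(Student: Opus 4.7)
The plan is to introduce the residual operator $R_{i} \equiv I - P_{i-1}\cdots P_{1}$ (so $R_{1}=0$, which also explains why the $i=1$ summand on the left vanishes), and to establish the clean three-term identity
$$\|Q_{i}R_{i}\|^{2} = \|Q_{i}\|^{2} + \|R_{i}\|^{2} - \|R_{i+1}\|^{2}$$
for every $i \in \{1,\ldots,L\}$, using the shorthand from \eqref{eq:shorthand-1}--\eqref{eq:q_i}. Once this identity is in hand, summing over $i$ telescopes the $R$-terms, and a single additional Pythagorean estimate on $\|R_{L+1}\|^{2}$ will deliver the bound.

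To derive the identity, the first step is to note the operator decomposition $R_{i+1} = Q_{i} + P_{i}R_{i}$, which follows immediately from rewriting $I - P_{i}P_{i-1}\cdots P_{1}$ as $(I-P_{i}) + P_{i}(I - P_{i-1}\cdots P_{1})$. Applied to $|\psi\rangle$, this splits $R_{i+1}|\psi\rangle$ into summands lying in $\operatorname{range} Q_{i}$ and $\operatorname{range} P_{i}$, which are orthogonal, so the Pythagorean theorem gives
$$\|R_{i+1}\|^{2} = \|Q_{i}\|^{2} + \|P_{i}R_{i}\|^{2}.$$
A second Pythagorean application, now to the orthogonal decomposition $R_{i}|\psi\rangle = Q_{i}R_{i}|\psi\rangle + P_{i}R_{i}|\psi\rangle$ (same complementary pair of projectors), converts $\|P_{i}R_{i}\|^{2}$ into $\|R_{i}\|^{2} - \|Q_{i}R_{i}\|^{2}$. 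Substituting and rearranging yields the target identity.

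Summing this identity for $i = 1, \ldots, L$ telescopes the $R$-terms, and using $R_{1} = 0$ collapses the result to
$$\sum_{i=1}^{L}\|Q_{i}R_{i}\|^{2} = \sum_{i=1}^{L}\|Q_{i}\|^{2} - \|R_{L+1}\|^{2}.$$
The first Pythagorean identity above, taken at $i = L$, already gives $\|R_{L+1}\|^{2} \geq \|Q_{L}\|^{2}$, which absorbs the last $Q$-term on the right and leaves exactly $\sum_{i=1}^{L-1}\|Q_{i}\|^{2}$. The equivalent form of the inequality then follows at once because the $i=1$ summand on the left is $\|Q_{1}R_{1}\|^{2} = 0$. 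The main obstacle---really the only creative step---is spotting the recursion $R_{i+1} = Q_{i} + P_{i}R_{i}$ that uses the \emph{same} complementary pair $(P_{i}, Q_{i})$ appearing in the second Pythagorean step; it is this synchronization that forces the two norm identities to combine into a clean telescoping form, rather than an unwieldy double recursion that would only yield the much weaker bound $\|R_{i}\|^{2} \leq \sum_{k<i}\|Q_{k}\|^{2}$.
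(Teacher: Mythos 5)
Your proof is correct, and it takes a genuinely different route from the paper's. The paper expands each summand $\Vert Q_i - Q_iP_{i-1}\cdots P_1\Vert^2$ into four inner products, collapses the cross terms using the telescoping identities \eqref{pro1}--\eqref{pro3} of Lemma~\ref{L1}, and then bounds the leftover quantity $-1+\langle P_L\cdots P_1\rangle+\langle P_1\cdots P_L\rangle-\langle P_1\cdots P_L\cdots P_1\rangle$ by $-\Vert Q_L\Vert^2$ via the Cauchy--Schwarz inequality and completing a square. You instead work with the residuals $R_i = I - P_{i-1}\cdots P_1$, use the recursion $R_{i+1}=Q_i+P_iR_i$ together with two Pythagorean decompositions along the same complementary pair $(P_i,Q_i)$ to get the per-step identity $\Vert Q_iR_i\Vert^2=\Vert Q_i\Vert^2+\Vert R_i\Vert^2-\Vert R_{i+1}\Vert^2$, and then telescope. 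Note that your telescoped identity $\sum_{i=1}^L\Vert Q_iR_i\Vert^2=\sum_{i=1}^L\Vert Q_i\Vert^2-\Vert R_{L+1}\Vert^2$ is exactly the paper's equation \eqref{apro2} in disguise, since $\Vert R_{L+1}\Vert^2 = 1-\langle P_L\cdots P_1\rangle-\langle P_1\cdots P_L\rangle+\langle P_1\cdots P_L\cdots P_1\rangle$; the genuine divergence is in the final estimate, where your Pythagorean step $\Vert R_{L+1}\Vert^2=\Vert Q_L\Vert^2+\Vert P_LR_L\Vert^2\geq\Vert Q_L\Vert^2$ replaces the paper's Cauchy--Schwarz-plus-square-completion argument and is arguably cleaner. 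What your route buys is a self-contained proof of the lemma needing only orthogonality of $\operatorname{ran}P_i$ and $\operatorname{ran}Q_i$, with an exact identity along the way; what the paper's route buys is reuse of the Lemma~\ref{L1} machinery, which is needed anyway (in particular \eqref{pro4}) for the proof of Theorem~\ref{T1}, so no extra bookkeeping with residual operators is introduced.
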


\begin{proof}
Consider the following chain of equalities:%
\begin{align}
&  \sum_{i=1}^{L}\left\Vert Q_{i}(I-P_{i-1}\cdots P_{1})\right\Vert ^{2}%
=\sum_{i=1}^{L}\left\Vert Q_{i}-Q_{i}P_{i-1}\cdots P_{1}\right\Vert
^{2}\nonumber\\
&  =\sum_{i=1}^{L}\left(  \left\Vert Q_{i}\right\Vert ^{2}-\langle
Q_{i}P_{i-1}\cdots P_{1}\rangle-\langle P_{1}\cdots P_{i-1}Q_{i}%
\rangle+\langle P_{1}\cdots P_{i-1}Q_{i}P_{i-1}\cdots P_{1}\rangle\right)
\label{apro1}\\
&  =\left(  \sum_{i=1}^{L}\left\Vert Q_{i}\right\Vert ^{2}\right)  -1+\langle
P_{L}\cdots P_{1}\rangle-1+\langle P_{1}\cdots P_{L}\rangle+1-\langle
P_{1}\cdots P_{L}\cdots P_{1}\rangle\label{apro2}\\
&  =\left(  \sum_{i=1}^{L}\left\Vert Q_{i}\right\Vert ^{2}\right)  -1+\langle
P_{L}P_{L}P_{L-1}\cdots P_{1}\rangle+\langle P_{1}\cdots P_{L-1}P_{L}%
P_{L}\rangle-\langle P_{1}\cdots P_{L}\cdots P_{1}\rangle.
\label{eq:connector}%
\end{align}
To obtain \eqref{apro1}, we used the identities in \eqref{projection}. Next,
to get \eqref{apro2}, the identities in~\eqref{pro1}, \eqref{pro2},
and~\eqref{pro3} of Lemma~\ref{L1} were used. Continuing, we have that%
\begin{align}
\text{Eq.}~\eqref{eq:connector}  &  \leq\left(  \sum_{i=1}^{L}\left\Vert
Q_{i}\right\Vert ^{2}\right)  -1-\langle P_{1}\cdots P_{L}\cdots P_{1}%
\rangle+2\sqrt{\langle P_{L}\rangle}\sqrt{\langle P_{1}\cdots P_{L}\cdots
P_{1}\rangle}\label{apro3}\\
&  =\left(  \sum_{i=1}^{L}\left\Vert Q_{i}\right\Vert ^{2}\right)  -1+\langle
P_{L}\rangle-\left(  \sqrt{\langle P_{L}\rangle}-\sqrt{\langle P_{1}\cdots
P_{L}\cdots P_{1}\rangle}\right)  ^{2}\\
&  \leq\left(  \sum_{i=1}^{L}\left\Vert Q_{i}\right\Vert ^{2}\right)
-\left\Vert Q_{L}\right\Vert ^{2}=\sum_{i=1}^{L-1}\left\Vert Q_{i}\right\Vert
^{2}.
\end{align}
To obtain \eqref{apro3}, the Cauchy-Schwarz inequality was employed.
\end{proof}

\bigskip

We are now in a position to prove Theorem~\ref{T1}:

\bigskip

\begin{proof}
[Proof of Theorem~\ref{T1}]Consider that%
\begin{align}
1-\Vert P_{L}\cdots P_{1}\Vert^{2}  &  =1-\langle P_{1}\cdots P_{L}\cdots
P_{1}\rangle+2\left(  1-\sqrt{\langle P_{L}\rangle}\sqrt{\langle P_{1}\cdots
P_{L}\cdots P_{1}\rangle}\right) \nonumber\\
&  \qquad\qquad-2\left(  1-\sqrt{\langle P_{L}\rangle}\sqrt{\langle
P_{1}\cdots P_{L}\cdots P_{1}\rangle}\right) \\
&  =2\left(  1-\sqrt{\langle P_{L}\rangle}\sqrt{\langle P_{1}\cdots
P_{L}\cdots P_{1}\rangle}\right) \nonumber\\
&  \qquad\qquad-\left(  \sqrt{\langle P_{L}\rangle}-\sqrt{\langle P_{1}\cdots
P_{L}\cdots P_{1}\rangle}\right)  ^{2}-1+\langle P_{L}\rangle.
\label{eq:connector-3}%
\end{align}
Continuing, we have that%
\begin{align}
\text{Eq.}~\eqref{eq:connector-3}  &  \leq-\left\Vert Q_{L}\right\Vert
^{2}+2\left(  1-\sqrt{\langle P_{L} \rangle}\sqrt{\langle P_{1}\cdots
P_{L}\cdots P_{1}\rangle}\right) \label{mpro1}\\
&  \leq-\left\Vert Q_{L}\right\Vert ^{2}+2\sum_{i=1}^{L}\sqrt{\langle
Q_{i}\rangle}\sqrt{\langle P_{1}\cdots P_{i-1}Q_{i}P_{i-1}\cdots P_{1}\rangle
}\label{mpro2}\\
&  \leq-\left\Vert Q_{L}\right\Vert ^{2}+2\sum_{i=1}^{L}\sqrt{\langle
Q_{i}\rangle}\left(  \left\Vert Q_{i}\right\Vert +\left\Vert Q_{i}%
(I-P_{i-1}\cdots P_{1})\right\Vert \right)  . \label{mpro3}%
\end{align}
First, \eqref{mpro1} is obtained by observing that
\begin{equation}
-\left(  \sqrt{\langle P_{L}\rangle}-\sqrt{\langle P_{1}\cdots P_{L}\cdots
P_{1}\rangle}\right)  ^{2}-1+\langle P_{L}\rangle\leq-1+\langle P_{L}%
\rangle=-\Vert Q_{L}\Vert^{2}.
\end{equation}
Next, \eqref{mpro2} follows from~\eqref{pro4} of Lemma~\ref{L1}. Then,
\eqref{mpro3} is a consequence of the triangle inequality:
\begin{align}
\sqrt{\langle P_{1}\cdots P_{i-1}Q_{i}P_{i-1}\cdots P_{1}\rangle}  &  = \Vert
Q_{i}P_{i-1}\cdots P_{1} \Vert\\
&  = \Vert Q_{i}(-I + I- P_{i-1}\cdots P_{1})\Vert\\
&  \leq\Vert Q_{i}\Vert+ \Vert Q_{i} (I- P_{i-1}\cdots P_{1})\Vert,
\end{align}
under the convention that $P_{i-1}\cdots P_{1}=I$ for $i=1$. Continuing, we
have that%
\begin{align}
\text{Eq.}~\eqref{mpro3}  &  =-\left\Vert Q_{L}\right\Vert ^{2}+2\sum
_{i=1}^{L}\left\Vert Q_{i}\right\Vert ^{2}+2\sum_{i=1}^{L}\left(  \left\Vert
Q_{i}\right\Vert \left\Vert Q_{i}(I-P_{i-1}\cdots P_{1})\right\Vert \right) \\
&  =-\left\Vert Q_{L}\right\Vert ^{2}+2\sum_{i=1}^{L}\left\Vert Q_{i}%
\right\Vert ^{2}+2\sum_{i=2}^{L}\left(  \left\Vert Q_{i}\right\Vert \left\Vert
Q_{i}(I-P_{i-1}\cdots P_{1})\right\Vert \right) \label{eq:apply-conv}\\
&  \leq-\left\Vert Q_{L}\right\Vert ^{2}+2\sum_{i=1}^{L}\left\Vert
Q_{i}\right\Vert ^{2}+\sum_{i=2}^{L}\left(  c\left\Vert Q_{i}\right\Vert
^{2}+c^{-1}\left\Vert Q_{i}(I-P_{i-1}\cdots P_{1})\right\Vert ^{2}\right)
\label{mpro3bis}\\
&  \leq-\left\Vert Q_{L}\right\Vert ^{2}+2\sum_{i=1}^{L}\left\Vert
Q_{i}\right\Vert ^{2}+c\sum_{i=2}^{L}\left\Vert Q_{i}\right\Vert ^{2}%
+c^{-1}\sum_{i=1}^{L-1}\left\Vert Q_{i}\right\Vert ^{2}\label{mpro4}\\
&  \leq(1+c)\left\Vert Q_{L}\right\Vert ^{2}+(2+c^{-1})\left\Vert
Q_{1}\right\Vert ^{2}+(2+c+c^{-1})\sum_{i=2}^{L-1}\left\Vert Q_{i}\right\Vert
^{2}. \label{mpro5}%
\end{align}
Eq.~\eqref{eq:apply-conv} follows from the convention that $P_{i-1}\cdots
P_{1}=I$ for $i=1$. Eq.~\eqref{mpro3bis} is a consequence of the inequality
$2xy\leq cx^{2}+c^{-1}y^{2}$, holding for $x,y\in\mathbb{R}$ and $c>0$.
Finally, \eqref{mpro4} is obtained by using Lemma \ref{L3}.
\end{proof}

\section{Generalization to POVMs}

\label{sec:gen-to-POVMs}

Just as the bound from \cite{S11} was generalized in \cite[Section~3]%
{Wilde20130259}\ from projectors to positive semi-definite operators having
eigenvalues between zero and one, we can do the same here. This generalization
is useful for applications, and we discuss one such application in the next section.

We now give an extension of the quantum union bound in
Theorem~\ref{thm:main-result}\ that applies for general measurements. The main
idea behind it is the well known Naimark extension theorem, following the
approach from \cite[Section~3]{Wilde20130259}.

\begin{lemma}
\label{lem:non-comm-bound}Let $\rho$ be a positive semi-definite operator
acting on a separable Hilbert space $\mathcal{H}_{S}$, let $\{\Lambda
_{i}\}_{i=1}^{L}$ denote a set of positive semi-definite operators such that
$0\leq\Lambda_{i}\leq I$ for all $i\in\left\{  1,\ldots,L\right\}  $, and let
$c>0$. Then the following quantum union bound holds%
\begin{multline}
\operatorname{Tr}\{\rho\}-\operatorname{Tr}\{\Pi_{\Lambda_{L}}\cdots
\Pi_{\Lambda_{1}}(\rho\otimes|\overline{0}\rangle\langle\overline{0}|_{P^{L}%
})\Pi_{\Lambda_{1}}\cdots\Pi_{\Lambda_{L}}\}\leq\left(  1+c\right)
\operatorname{Tr}\{(I-\Lambda_{L})\rho\}\label{eq:ext-main-bound}\\
+\left(  2+c+c^{-1}\right)  \sum_{i=2}^{L-1}\operatorname{Tr}\left\{  \left(
I-\Lambda_{i}\right)  \sigma\right\}  +\left(  2+c^{-1}\right)
\operatorname{Tr}\{(I-\Lambda_{1})\rho\},
\end{multline}
where $|\overline{0}\rangle_{P^{L}}\equiv\left\vert 0\right\rangle _{P_{1}%
}\otimes\cdots\otimes\left\vert 0\right\rangle _{P_{L}}$ is an auxiliary state
of $L$ qubit probe systems and $\Pi_{\Lambda_{i}}$ is a projector defined as
$\Pi_{\Lambda_{i}}\equiv U_{i}^{\dag}P_{i}U_{i}$, for some unitary $U_{i}$ and
projector $P_{i}$ such that%
\begin{equation}
\operatorname{Tr}\{\Pi_{\Lambda_{i}}(\rho\otimes|\overline{0}\rangle
\langle\overline{0}|_{P^{L}})\}=\operatorname{Tr}\{ \Lambda_{i}\rho\} .
\end{equation}

\end{lemma}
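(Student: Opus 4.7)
The plan is to reduce Lemma~\ref{lem:non-comm-bound} to Theorem~\ref{thm:main-result} via Naimark's dilation theorem, exactly in the spirit of \cite[Section~3]{Wilde20130259}. The idea is that each effect operator $\Lambda_i$ satisfying $0 \leq \Lambda_i \leq I$ can be realized as a projective measurement on a larger Hilbert space obtained by adjoining one ancillary qubit probe per $\Lambda_i$. Once we have projectors $\Pi_{\Lambda_i}$ on the enlarged space whose statistics on the probe-augmented state $\rho\otimes|\overline{0}\rangle\langle\overline{0}|_{P^L}$ reproduce those of $\Lambda_i$ on $\rho$, the lemma is an immediate consequence of applying Theorem~\ref{thm:main-result} in the dilated setting.

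Concretely, for each $i \in \{1,\ldots,L\}$ I would introduce the qubit probe $P_i$, initialized in $|0\rangle_{P_i}$, and take the standard Naimark unitary $U_i$ acting on $\mathcal{H}_S \otimes \mathcal{H}_{P_i}$ (and as the identity on the other probes) whose first block-column relative to the $P_i$ computational basis is $(\sqrt{\Lambda_i},\sqrt{I-\Lambda_i})^T$. Setting $\Pi_{\Lambda_i} \equiv U_i^\dagger (I_S \otimes |0\rangle\langle 0|_{P_i} \otimes I_{\text{other probes}}) U_i$ gives a projector on $\mathcal{H}_S \otimes \mathcal{H}_{P^L}$ satisfying
\begin{equation}
\operatorname{Tr}\{\Pi_{\Lambda_i}(\rho \otimes |\overline{0}\rangle\langle\overline{0}|_{P^L})\} = \operatorname{Tr}\{\Lambda_i \rho\},
\end{equation}
and hence $\operatorname{Tr}\{(I-\Pi_{\Lambda_i})(\rho\otimes|\overline{0}\rangle\langle\overline{0}|_{P^L})\} = \operatorname{Tr}\{(I-\Lambda_i)\rho\}$. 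I would then apply Theorem~\ref{thm:main-result} to the state $\rho\otimes|\overline{0}\rangle\langle\overline{0}|_{P^L}$ (normalizing by $\operatorname{Tr}\{\rho\}$ and rescaling at the end, since both sides of the desired inequality are homogeneous of degree one in $\rho$) with the projectors $\{\Pi_{\Lambda_i}\}_{i=1}^L$ and the same constant $c>0$. The left-hand side of the resulting inequality is literally the left-hand side of \eqref{eq:ext-main-bound}, and each term on the right-hand side reduces to $\operatorname{Tr}\{(I-\Lambda_i)\rho\}$ by the Naimark identity above.

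I do not expect any deep obstacle: the only thing requiring attention is the bookkeeping in the dilation, namely ensuring that the various $U_i$ act trivially on the probes with index different from $i$, so that all the $\Pi_{\Lambda_i}$ live on the same joint Hilbert space $\mathcal{H}_S \otimes \mathcal{H}_{P^L}$ and can be composed in the order demanded by Theorem~\ref{thm:main-result}. With this bookkeeping in place, the lemma follows by direct substitution, and this is exactly why the ancilla register is chosen to consist of $L$ independent qubit probes rather than a single shared ancilla.
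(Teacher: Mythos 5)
Your proposal is correct and follows essentially the same route as the paper: construct a Naimark dilation with one qubit probe per effect $\Lambda_i$, define projectors $\Pi_{\Lambda_i}=U_i^\dagger(\cdot)U_i$ reproducing the statistics $\operatorname{Tr}\{\Lambda_i\rho\}$ on $\rho\otimes|\overline{0}\rangle\langle\overline{0}|_{P^L}$, and then apply Theorem~\ref{thm:main-result} on the enlarged space. The only differences are cosmetic (you project onto $|0\rangle\langle 0|_{P_i}$ where the paper uses $|1\rangle\langle 1|_{P_i}$, and you make explicit the rescaling for unnormalized $\rho$, which the paper leaves implicit).
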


\begin{proof}
This extension of Theorem~\ref{thm:main-result}\ follows easily by employing
the Naimark extension theorem. Concretely, to each operator $\Lambda_{i}$, we
associate the following unitary:%
\begin{equation}
U_{SP_{i}}\equiv\sqrt{I_{S}-(\Lambda_{i})_{S}}\otimes\left[  |0\rangle
\langle0|_{P_{i}}+|1\rangle\langle1|_{P_{i}}\right]  +\sqrt{(\Lambda_{i})_{S}%
}\otimes\left[  |1\rangle\langle0|_{P_{i}}-|0\rangle\langle1|_{P_{i}}\right]
.
\end{equation}
Then defining the projectors $\Pi_{\Lambda_{i}}\equiv U_{SP_{i}}^{\dag}\left(
I_{S}\otimes|1\rangle\langle1|_{P_{i}}\right)  U_{SP_{i}}$, a straightforward
calculation gives that $\operatorname{Tr}\{\Pi_{\Lambda_{i}}(\rho
\otimes|\overline{0}\rangle\langle\overline{0}|_{P^{L}})\}=\text{Tr}\{
\Lambda_{i}\rho_{S}\}$. Observe that the operator $\Pi_{\Lambda_{i}}$ is an
orthogonal projector (because it is Hermitian and idempotent), so that
Theorem~\ref{thm:main-result}\ applies to each of these operators. Then
\eqref{eq:ext-main-bound} follows.
\end{proof}

\section{Lower bound on the second-order coding rate for classical
communication}

\label{sec:ea-comm}One application of our main result,
Theorem~\ref{thm:main-result}, is in achieving the second-order coding rate
for classical communication. As we stated earlier, this area of quantum
information theory has advanced in recent years
\cite{TH12,TT13,datta2015second,DTW14,WRG15,TBR15,L16,WTB16,wilde2017position}%
, with one of the main reasons being the availability of the tunable parameter
$c>0$ in the Hayashi--Nagaoka inequality \cite[Lemma~2]{HN03}. That is, one
can let $c>0$ vary, to become closer to zero, as the number of channel uses increases.

An advantage of our Theorem~\ref{thm:main-result} is that it applies directly
to the case of states and projectors that act on an infinite-dimensional,
separable Hilbert space. Thus, the theorem can be applied directly in order to
achieve a lower bound on the second-order coding rate for classical
communication. To our knowledge, prior to our work here, \cite{WRG15}
presented the only case in which lower bounds on the second-order coding rates
have been considered in this general case, and there, the analysis was limited
to channels that accept a classical input and output a pure quantum state. The
situation that we analyze here is thus more general.

\subsection{Information quantities}

Before we begin with the application, let us recall some information
quantities that are essential in the analysis. Let $\mathcal{H}$ denote a
separable Hilbert space, and let $\mathcal{D}(\mathcal{H})$ denote the set of
density operators acting on $\mathcal{H}$ (positive, semi-definite operators
with trace equal to one). Let spectral decompositions of $\rho,\sigma
\in\mathcal{D}(\mathcal{H})$ be given as%
\begin{equation}
\rho=\sum_{x\in\mathcal{X}}\lambda_{x}P_{x},\qquad\sigma=\sum_{y\in
\mathcal{Y}}\mu_{y}Q_{y}, \label{eq:spec-decomp-rho-sig}%
\end{equation}
where $\mathcal{X}$ and $\mathcal{Y}$ are countable index sets, $\{\lambda
_{x}\}_{x\in\mathcal{X}}$ and $\{\mu_{y}\}_{y\in\mathcal{Y}}$ are probability
distributions with $\lambda_{x},\mu_{y}\geq0$ for all $x\in\mathcal{X}$ and
$y\in\mathcal{Y}$ and $\sum_{x\in\mathcal{X}}\lambda_{x}=\sum_{y\in
\mathcal{Y}}\mu_{y}=1$, and $\{P_{x}\}_{x\in\mathcal{X}}$ and $\{Q_{y}%
\}_{y\in\mathcal{Y}}$ are sets of projections such that $\sum_{x\in
\mathcal{X}}P_{x}=\sum_{y\in\mathcal{Y}}Q_{y}=I$.

The hypothesis testing relative entropy $D_{H}^{\varepsilon}(\rho\Vert\sigma)$
is defined for $\varepsilon\in\left[  0,1\right]  $ as \cite{BD10,WR12}%
\begin{equation}
D_{H}^{\varepsilon}(\rho\Vert\sigma)\equiv-\log_{2}\inf_{\Lambda}\left\{
\operatorname{Tr}\{\Lambda\sigma\}:\operatorname{Tr}\{\Lambda\rho
\}\geq1-\varepsilon\wedge0\leq\Lambda\leq I\right\}  .
\end{equation}
The quantum relative entropy \cite{Lindblad1973}, the quantum relative entropy
variance \cite{TH12,li12,KW17a}, and the $T$ quantity \cite{TH12,li12,KW17a}
are defined as%
\begin{align}
D(\rho\Vert\sigma)  &  \equiv\sum_{x\in\mathcal{X},y\in\mathcal{Y}}\lambda
_{x}\operatorname{Tr}\{P_{x}Q_{y}\}\log_{2}\!\left(  \frac{\lambda_{x}}%
{\mu_{y}}\right)  ,\label{eq:rel-ent-def}\\
V(\rho\Vert\sigma)  &  \equiv\sum_{x\in\mathcal{X},y\in\mathcal{Y}}\lambda
_{x}\operatorname{Tr}\{P_{x}Q_{y}\}\left[  \log_{2}\!\left(  \frac{\lambda
_{x}}{\mu_{y}}\right)  -D(\rho\Vert\sigma)\right]  ^{2}%
,\label{eq:rel-ent-var-def}\\
T(\rho\Vert\sigma)  &  \equiv\sum_{x\in\mathcal{X},y\in\mathcal{Y}}\lambda
_{x}\operatorname{Tr}\{P_{x}Q_{y}\}\left\vert \log_{2}\!\left(  \frac
{\lambda_{x}}{\mu_{y}}\right)  -D(\rho\Vert\sigma)\right\vert ^{3}.
\label{eq:rel-ent-T-def}%
\end{align}
For states $\rho$ and $\sigma$\ satisfying%
\begin{equation}
D(\rho\Vert\sigma),V(\rho\Vert\sigma),T(\rho\Vert\sigma)<\infty,\qquad\qquad
V(\rho\Vert\sigma)>0,
\end{equation}
the following expansion holds for the hypothesis testing relative entropy for
$\varepsilon\in(0,1)$ and a sufficiently large positive integer $n$:%
\begin{equation}
D_{H}^{\varepsilon}(\rho^{\otimes n}\Vert\sigma^{\otimes n})=nD(\rho
\Vert\sigma)+\sqrt{nV(\rho\Vert\sigma)}\Phi^{-1}(\varepsilon)+O(\log n),
\label{eq:hypo-expand}%
\end{equation}
where%
\begin{equation}
\Phi(a)\equiv\frac{1}{\sqrt{2\pi}}\int_{-\infty}^{a}dx\ \exp\left(
-x^{2}/2\right)  ,\qquad\qquad\Phi^{-1}(\varepsilon)\equiv\sup\left\{
a\in\mathbb{R}\ |\ \Phi(a)\leq\varepsilon\right\}  .
\end{equation}
The equality in \eqref{eq:hypo-expand} was proven for finite-dimensional
states $\rho$ and $\sigma$ in \cite{TH12,li12}. For the case of states acting
on infinite-dimensional, separable Hilbert spaces, the inequality $\leq$ in
\eqref{eq:hypo-expand} was proven in \cite{DPR15} and \cite[Appendix~C]%
{KW17a}. In Appendix~\ref{app:hypo-expand}, we prove the inequality $\geq$ in
\eqref{eq:hypo-expand}. The proof that we detail follows the development in
\cite[Appendix~C]{DPR15} very closely, which is in turn based on
\cite[Section~3.2]{li12}.

\subsection{Communication codes}

\label{sec:code-defs}We now recall what we mean by a code for classical
communication and one for entanglement-assisted classical communication,
starting with the former. Note that classical communication was considered for
the asymptotic case in \cite{Hol98,PhysRevA.56.131}. Suppose that a channel
$\mathcal{N}_{A\rightarrow B}$ connects a sender Alice to a receiver Bob. For
positive integers $n$ and $M$, and $\varepsilon\in\left[  0,1\right]  $, an
$(n,M,\varepsilon)$ code for classical communication consists of a set
$\{\rho_{A^{n}}^{m}\}_{m\in\mathcal{M}}$ of quantum states, which are called
quantum codewords, and where $\left\vert \mathcal{M}\right\vert =M$. It also
consists of a decoding POVM $\left\{  \Lambda_{B^{n}}^{m}\right\}
_{m\in\mathcal{M}}$ and satisfies the following condition:%
\begin{equation}
\frac{1}{M}\sum_{m\in\mathcal{M}}\operatorname{Tr}\{\left(  I_{B^{n}}%
-\Lambda_{B^{n}}^{m}\right)  \mathcal{N}_{A\rightarrow B}^{\otimes n}%
(\rho_{A^{n}}^{m})\}\leq\varepsilon,
\end{equation}
which we interpret as saying that the average error probability is no larger
than $\varepsilon$, when using the quantum codewords and decoding POVM
described above. The non-asymptotic classical capacity of $\mathcal{N}%
_{A\rightarrow B}$, denoted by $C(\mathcal{N}_{A\rightarrow B},n,\varepsilon)$
is equal to the largest value of $\frac{1}{n}\log_{2}M$ (bits per channel use)
for which there exists an $\left(  n,M,\varepsilon\right)  $ code as described above.

Entanglement-assisted classical communication is defined similarly, but one
allows for Alice and Bob to share an arbitrary quantum state $\Psi_{A^{\prime
}B^{\prime}}$ before communication begins. Note that entanglement-assisted
classical communication was considered for the asymptotic case in
\cite{PhysRevLett.83.3081,ieee2002bennett,Hol01a}. For positive integers $n$
and $M$, and $\varepsilon\in\left[  0,1\right]  $, an $(n,M,\varepsilon)$ code
for entanglement-assisted classical communication consists of the resource
state $\Psi_{A^{\prime}B^{\prime}}$, a set $\{\mathcal{E}_{A^{\prime
}\rightarrow A^{n}}^{m}\}_{m\in\mathcal{M}}$ of encoding channels, where
$\left\vert \mathcal{M}\right\vert =M$. It also consists of a decoding POVM
$\left\{  \Lambda_{B^{n}B^{\prime}}^{m}\right\}  _{m\in\mathcal{M}}$ and
satisfies the following condition:%
\begin{equation}
\frac{1}{M}\sum_{m\in\mathcal{M}}\operatorname{Tr}\{\left(  I_{B^{n}B^{\prime
}}-\Lambda_{B^{n}B^{\prime}}^{m}\right)  \mathcal{N}_{A\rightarrow B}^{\otimes
n}(\mathcal{E}_{A^{\prime}\rightarrow A^{n}}^{m}(\Psi_{A^{\prime}B^{\prime}%
}))\}\leq\varepsilon,
\end{equation}
which we interpret as saying that the average error probability is no larger
than $\varepsilon$, when using the entanglement-assisted code described above.
The non-asymptotic entanglement-assisted classical capacity of $\mathcal{N}%
_{A\rightarrow B}$, denoted by $C_{\operatorname{EA}}(\mathcal{N}%
_{A\rightarrow B},n,\varepsilon)$ is equal to the largest value of $\frac
{1}{n}\log_{2}M$ (bits per channel use) for which there exists an $\left(
n,M,\varepsilon\right)  $ entanglement-assisted code as described above.

\subsection{Lower bound on second-order coding rate}

Defining the $\varepsilon$-mutual information of a bipartite state $\tau_{CD}$
as%
\begin{equation}
I_{H}^{\varepsilon}(C;D)_{\tau}\equiv D_{H}^{\varepsilon}(\tau_{CD}\Vert
\tau_{C}\otimes\tau_{D}),
\end{equation}
the following inequality was proven recently in \cite[Theorem~8]{QWW17} for
the finite-dimensional case, improving upon a prior result from \cite{DTW14}:%
\begin{equation}
C_{\operatorname{EA}}(\mathcal{N}_{A\rightarrow B},1,\varepsilon)\geq
I_{H}^{\varepsilon-\eta}(R;B)_{\zeta}-\log_{2}(4\varepsilon/\eta^{2}),
\label{eq:finite-dim-bnd}%
\end{equation}
where $\varepsilon\in(0,1)$, $\eta\in(0,\varepsilon)$, $\zeta_{RB}%
\equiv\mathcal{N}_{A\rightarrow B}(\rho_{RA})$, and $\rho_{RA}$ is a bipartite
state. The techniques employed in the proof of \cite[Theorem~8]{QWW17}\ were
position-based coding \cite{anshu2017one}\ and the Hayashi--Nagaoka inequality
\cite[Lemma~2]{HN03}. Note that the position-based coding method can be
understood as a variation of the well known and studied coding technique
called pulse position modulation \cite{verdu1990channel,eltit03}. We now
generalize the inequality in \eqref{eq:finite-dim-bnd}\ to the
infinite-dimensional case by applying Theorem~\ref{thm:main-result}, along
with position-based coding \cite{anshu2017one} and the sequential decoding
strategy from \cite{Wilde20130259}.

\begin{theorem}
\label{thm:EA-comm}Let $\mathcal{H}_{A}$, $\mathcal{H}_{B}$, and
$\mathcal{H}_{R}$ be separable Hilbert spaces. Let $\mathcal{N}_{A\rightarrow
B}$ be a quantum channel, taking $\mathcal{D}(\mathcal{H}_{A})$ to
$\mathcal{D}(\mathcal{H}_{B})$. Then the following bound holds:%
\begin{equation}
C_{\operatorname{EA}}(\mathcal{N}_{A\rightarrow B},1,\varepsilon)\geq
I_{H}^{\varepsilon-\eta}(R;B)_{\zeta}-\log_{2}(4\varepsilon/\eta^{2}),
\end{equation}
where $\varepsilon\in(0,1)$, $\eta\in(0,\varepsilon)$, $\zeta_{RB}%
\equiv\mathcal{N}_{A\rightarrow B}(\rho_{RA})$, and $\rho_{RA}\in
\mathcal{D}(\mathcal{H}_{R}\otimes\mathcal{H}_{A})$ is a bipartite state.
\end{theorem}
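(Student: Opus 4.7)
The plan is to generalize the finite-dimensional position-based coding argument of \cite[Theorem~8]{QWW17} to separable Hilbert spaces by replacing the Hayashi--Nagaoka inequality with Lemma~\ref{lem:non-comm-bound} and using the sequential decoding strategy from \cite{Wilde20130259}. Fix the target code size $M$. Take the shared entanglement to be $\Psi_{A'B'}=\bigotimes_{i=1}^{M}\rho_{R_{i}A_{i}}$, with each factor a copy of $\rho_{RA}$, so Alice holds $A'=A_{1}\cdots A_{M}$ and Bob holds $B'=R_{1}\cdots R_{M}$. The encoder $\mathcal{E}^{m}_{A'\to A}$ for message $m$ simply discards every $A_{i}$ with $i\neq m$ and transmits $A_{m}$ through $\mathcal{N}_{A\to B}$. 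Consequently, when message $m$ is sent, Bob's joint state is
\begin{equation}
\tau^{(m)}_{R_{1}\cdots R_{M}B}=\rho_{R_{1}}\otimes\cdots\otimes\zeta_{R_{m}B}\otimes\cdots\otimes\rho_{R_{M}},
\end{equation}
whose marginal on $R_{i}B$ factorizes as $\rho_{R}\otimes\zeta_{B}$ for every $i\neq m$.

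Pick any $\Lambda_{RB}$ satisfying $0\leq\Lambda_{RB}\leq I$, $\operatorname{Tr}\{\Lambda_{RB}\zeta_{RB}\}\geq 1-(\varepsilon-\eta)$, and $\operatorname{Tr}\{\Lambda_{RB}(\rho_{R}\otimes\zeta_{B})\}\leq\mu$, where $\mu\equiv 2^{-I_{H}^{\varepsilon-\eta}(R;B)_{\zeta}}$; such an operator exists by the definition of $D_H^{\varepsilon-\eta}$ as an infimum. Let $\Lambda_{i}$ denote its embedding that acts as $\Lambda_{R_{i}B}$ on $R_{i}B$ and as the identity on the remaining registers. Bob's decoder tests $\Lambda_{1},\Lambda_{2},\ldots,\Lambda_{M}$ in order and declares the first accepting index. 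Using the Naimark dilations built into Lemma~\ref{lem:non-comm-bound} to treat each binary-outcome POVM as a projective measurement, the success probability when message $m$ was sent is the probability of a sequence of $L=m$ outcomes with $Q_{1}=I-\Lambda_{1},\ldots,Q_{m-1}=I-\Lambda_{m-1},Q_{m}=\Lambda_{m}$, and Lemma~\ref{lem:non-comm-bound} then yields
\begin{equation}
1-P_{\mathrm{succ}}(m)\leq(1+c)(\varepsilon-\eta)+\bigl[(2+c+c^{-1})(m-2)+(2+c^{-1})\bigr]\mu
\end{equation}
for $m\geq 2$, while $1-P_{\mathrm{succ}}(1)\leq\varepsilon-\eta$ directly.

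Averaging over the uniformly random message gives an average error probability bounded by
\begin{equation}
\bar{p}_{e}\leq(1+c)(\varepsilon-\eta)+\tfrac{1}{2}(2+c+c^{-1})(M-1)\mu+(2+c^{-1})\mu.
\end{equation}
Setting $c=\eta/(2(\varepsilon-\eta))$ makes the first summand equal to $\varepsilon-\eta/2$, and selecting the largest integer $M$ with $\log_{2}M\leq I_{H}^{\varepsilon-\eta}(R;B)_{\zeta}-\log_{2}(4\varepsilon/\eta^{2})$ ensures $M\mu\leq\eta^{2}/(4\varepsilon)$, which forces the remaining terms to stay under $\eta/2$; this produces an $(1,M,\varepsilon)$ entanglement-assisted code of the claimed size, from which the theorem follows.

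The main obstacle, and the reason why Theorem~\ref{thm:main-result} is the right tool, is that every construction above must remain well-defined when $\mathcal{H}_{A},\mathcal{H}_{B},\mathcal{H}_{R}$ are merely separable: the Naimark extensions enlarge the Hilbert space only by countably many qubit probes so that separability is preserved, and the tunable parameter $c$ in our union bound lets us drive the prefactor on the $\operatorname{Tr}\{(I-\Lambda_{L})\rho\}$ term arbitrarily close to one, which is precisely what is needed to match $I_{H}^{\varepsilon-\eta}(R;B)_{\zeta}$ without an additional multiplicative loss. A secondary bookkeeping issue is that the infimum in the definition of $D_{H}^{\varepsilon-\eta}$ may not be attained in infinite dimensions, but this is easily handled by choosing $\Lambda_{RB}$ to come within an arbitrarily small gap of the infimum, which only affects the final bound infinitesimally.
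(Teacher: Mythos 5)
Your construction coincides with the paper's: position-based coding with $M$ shared copies of $\rho_{RA}$, a Naimark dilation of the hypothesis-testing operator $\Lambda_{RB}$, sequential decoding of the dilated binary tests, and an error analysis via Theorem~\ref{thm:main-result} (in the packaged form of Lemma~\ref{lem:non-comm-bound}). Your per-message bound for $m\ge 2$,
\begin{equation}
p_{\mathrm e}(m)\leq(1+c)(\varepsilon-\eta)+\bigl[(2+c+c^{-1})(m-2)+(2+c^{-1})\bigr]\mu,\qquad \mu\equiv 2^{-I_{H}^{\varepsilon-\eta}(R;B)_{\zeta}},
\end{equation}
is correct, and is even slightly tighter than the paper's, which bounds all $m-1$ rejection terms by $(2+c+c^{-1})M\mu$ and thereby obtains a maximal-error (not merely average-error) guarantee without any averaging over messages.

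The gap is in the final parameter choice. With your $c=\eta/(2(\varepsilon-\eta))$ the first summand is indeed $\varepsilon-\eta/2$, but the assertion that $M\mu\le\eta^{2}/(4\varepsilon)$ ``forces the remaining terms to stay under $\eta/2$'' fails on part of the claimed range $\eta\in(0,\varepsilon)$. Since $2+c+c^{-1}=\frac{(2\varepsilon-\eta)^{2}}{2\eta(\varepsilon-\eta)}$ for your $c$, your hypotheses only give, for the middle term,
\begin{equation}
\tfrac12(2+c+c^{-1})(M-1)\mu\le\frac{(2\varepsilon-\eta)^{2}}{4\eta(\varepsilon-\eta)}\cdot\frac{\eta^{2}}{4\varepsilon}=\frac{\eta(2\varepsilon-\eta)^{2}}{16\varepsilon(\varepsilon-\eta)},
\end{equation}
which exceeds $\eta/2$ as soon as $(2\varepsilon-\eta)^{2}>8\varepsilon(\varepsilon-\eta)$, i.e., for all $\eta>2(\sqrt{2}-1)\varepsilon\approx 0.83\,\varepsilon$, and it diverges relative to $\eta/2$ as $\eta\to\varepsilon$; so your argument does not establish $p_{\mathrm e}\le\varepsilon$ there. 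The repair is a one-line change, namely the paper's choice $c=\eta/(2\varepsilon-\eta)$: then even the cruder per-message bound closes exactly,
\begin{equation}
(1+c)(\varepsilon-\eta)+(2+c+c^{-1})M\mu\le\frac{2\varepsilon(\varepsilon-\eta)}{2\varepsilon-\eta}+\frac{4\varepsilon^{2}}{\eta(2\varepsilon-\eta)}\cdot\frac{\eta^{2}}{4\varepsilon}=\varepsilon,
\end{equation}
with no averaging needed and with $\log_{2}M=I_{H}^{\varepsilon-\eta}(R;B)_{\zeta}-\log_{2}(4\varepsilon/\eta^{2})$ as in the theorem. Your closing remark about the infimum in $D_{H}^{\varepsilon-\eta}$ is not a real obstacle (the feasible set $\{0\le\Lambda\le I\}$ is weak-* compact and $\Lambda\mapsto\operatorname{Tr}\{\Lambda\rho\}$ is weak-* continuous for trace-class $\rho$, so the infimum is attained; alternatively, as in the paper, one proves the error bound for every feasible $\Lambda_{RB}$ and infimizes the right-hand side afterwards), so the defective constant $c$ is the only substantive issue.
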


\begin{proof}
Let $\Lambda_{RB}$ be a measurement operator (i.e., $0\leq\Lambda_{RB}\leq
I_{RB}$) satisfying%
\begin{equation}
\operatorname{Tr}\{(I_{RB}-\Lambda_{RB})\mathcal{N}_{A\rightarrow B}(\rho
_{RA})\}\leq\varepsilon-\eta. \label{eq:cond-hypo-test}%
\end{equation}
To this operator $\Lambda_{RB}$ is associated a unitary $U_{RBP}$, defined as%
\begin{equation}
U_{RBP}\equiv\sqrt{I_{RB}-\Lambda_{RB}}\otimes\left[  |0\rangle\langle
0|_{P}+|1\rangle\langle1|_{P}\right]  +\sqrt{\Lambda_{RB}}\otimes\left[
|1\rangle\langle0|_{P}-|0\rangle\langle1|_{P}\right]  .
\end{equation}
Then defining the projectors%
\begin{align}
\Pi_{RBP}  &  \equiv U_{RBP}^{\dag}\left(  I_{RB}\otimes|1\rangle\langle
1|_{P}\right)  U_{RBP},\\
\hat{\Pi}_{RBP}  &  \equiv I_{RBP}-\Pi_{RBP}=U_{RBP}^{\dag}\left(
I_{RB}\otimes|0\rangle\langle0|_{P}\right)  U_{RBP},
\end{align}
the inequality in \eqref{eq:cond-hypo-test} and a simple calculation imply
that%
\begin{equation}
\operatorname{Tr}\{\left(  I_{RBP}-\Pi_{RBP}\right)  \mathcal{N}_{A\rightarrow
B}(\rho_{RA})\otimes|0\rangle\langle0|_{P}\}\leq\varepsilon-\eta.
\end{equation}
This kind of construction and equality is known as the Naimark extension theorem.

The position-based coding strategy then proceeds as follows. We let Alice and
Bob share $M$ copies of the resource state $\rho_{RA}$, where Bob has the $R$
systems and Alice the $A$ systems. If Alice would like to transmit message
$m\in\mathcal{M}$, then she simply selects the $m$th $A$ system, and sends it
through the channel $\mathcal{N}_{A\rightarrow B}$. The marginal state of
Bob's systems is then as follows:%
\begin{equation}
\rho_{R_{1}}\otimes\cdots\otimes\rho_{R_{m-1}}\otimes\mathcal{N}%
_{A_{m}\rightarrow B}(\rho_{R_{m}A_{m}})\otimes\rho_{R_{m+1}}\otimes
\cdots\otimes\rho_{R_{M}}.
\end{equation}
Bob then uses a sequential decoding strategy to determine which message Alice
transmitted. He introduces $M$ auxiliary probe systems in the state
$|0\rangle\langle0|$, so that Bob's overall state is now%
\begin{equation}
\omega_{R^{M}BP^{M}}^{m}\equiv\rho_{R_{1}}\otimes\cdots\otimes\rho_{R_{m-1}%
}\otimes\mathcal{N}_{A_{m}\rightarrow B}(\rho_{R_{m}A_{m}})\otimes
\rho_{R_{m+1}}\otimes\cdots\otimes\rho_{R_{M}}\otimes|0\rangle\langle
0|_{P_{1}}\otimes\cdots\otimes|0\rangle\langle0|_{P_{M}}.
\end{equation}
He then performs the binary measurements $\{\Pi_{R_{i}BP_{i}},\hat{\Pi}%
_{R_{i}BP_{i}}\}$ sequentially, in the order $i=1$, $i=2$, etc. With this
strategy, the probability that he decodes the $m$th message correctly is given
by%
\begin{equation}
\operatorname{Tr}\{\Pi_{R_{m}BP_{m}}\hat{\Pi}_{R_{m-1}BP_{m-1}}\cdots\hat{\Pi
}_{R_{1}BP_{1}}\omega_{R^{M}BP^{M}}^{m}\hat{\Pi}_{R_{1}BP_{1}}\cdots\hat{\Pi
}_{R_{m-1}BP_{m-1}}\}.
\end{equation}
Applying Theorem~\ref{thm:main-result}, we can bound the complementary (error)
probability as%
\begin{align}
p_{\text{e}}(m)  &  \equiv1-\operatorname{Tr}\{\Pi_{R_{m}BP_{m}}\hat{\Pi
}_{R_{m-1}BP_{m-1}}\cdots\hat{\Pi}_{R_{1}BP_{1}}\omega_{R^{M}BP^{M}}^{m}%
\hat{\Pi}_{R_{1}BP_{1}}\cdots\hat{\Pi}_{R_{m-1}BP_{m-1}}\}\\
&  \leq\left(  1+c\right)  \operatorname{Tr}\{\hat{\Pi}_{R_{m}BP_{m}}%
\omega_{R^{M}BP^{M}}^{m}\}+\left(  2+c+c^{-1}\right)  \sum_{i=1}%
^{m-1}\operatorname{Tr}\{\Pi_{R_{i}BP_{i}}\omega_{R^{M}BP^{M}}^{m}\}\\
&  =\left(  1+c\right)  \operatorname{Tr}\{(I_{RB}-\Lambda_{RB})\mathcal{N}%
_{A\rightarrow B}(\rho_{RA})\}\nonumber\\
&  \qquad+\left(  2+c+c^{-1}\right)  \left(  m-1\right)  \operatorname{Tr}%
\{\Lambda_{RB}\left[  \rho_{R}\otimes\mathcal{N}_{A\rightarrow B}(\rho
_{A})\right]  \}\\
&  \leq\left(  1+c\right)  \left(  \varepsilon-\eta\right)  +\left(
2+c+c^{-1}\right)  M\operatorname{Tr}\{\Lambda_{RB}\left[  \rho_{R}%
\otimes\mathcal{N}_{A\rightarrow B}(\rho_{A})\right]  \},
\end{align}
where $c>0$. Since the whole development above holds for all measurement
operators $\Lambda_{RB}$ satisfying \eqref{eq:cond-hypo-test}, we can take an
infimum over all of them to obtain the following uniform bound on the error
probability when sending an arbitrary message $m\in\mathcal{M}$:%
\begin{equation}
p_{\text{e}}(m)\leq\left(  1+c\right)  \left(  \varepsilon-\eta\right)
+\left(  2+c+c^{-1}\right)  M2^{-I_{H}^{\varepsilon-\eta}(R;B)_{\zeta}}.
\end{equation}
Picking $c=\eta/(2\varepsilon-\eta)$ and taking $M$ such that%
\begin{equation}
\log_{2}M=I_{H}^{\varepsilon-\eta}(R;B)_{\zeta}-\log_{2}(4\varepsilon/\eta
^{2}) \label{eq:bits-ea-code}%
\end{equation}
then implies that $p_{\text{e}}(m)\leq\varepsilon$ for all $m\in\mathcal{M}$.
Since we have shown the existence of a $(1,M,\varepsilon)$
entanglement-assisted code, where $M$ satisfies \eqref{eq:bits-ea-code}, this
concludes the proof.
\end{proof}

\begin{remark}
It is worthwhile to note that the code above has an error probability less
than $\varepsilon$ for every message $m \in\mathcal{M}$, and so the error
criterion is maximal error probability and not just average error probability.
\end{remark}

The above result also implies rates that are achievable for unassisted
classical communication, by combining Theorem~\ref{thm:EA-comm}\ with an
analysis nearly identical to that given in \cite[Section~3.3]%
{wilde2017position}. In particular, we could allow Alice and Bob to share many
copies of the following classical--quantum state before communication begins:%
\begin{equation}
\rho_{XA}\equiv\sum_{x\in\mathcal{X}}p(x)|x\rangle\langle x|_{X}\otimes
\rho_{A}^{x}, \label{eq:cq-state}%
\end{equation}
where $\mathcal{H}_{X}$ and $\mathcal{H}_{A}$ are separable Hilbert spaces,
$\mathcal{X}$ is a countable index set, $\{p(x)\}_{x\in\mathcal{X}}$ is a
probability distribution, $\{|x\rangle_{X}\}_{x\in\mathcal{X}}$ is a set of
orthonormal states, and $\{\rho_{A}^{x}\}_{x\in\mathcal{X}}$ is a set of
states. This state then plays the role of the resource state $\rho_{RA}$ in
the proof of Theorem~\ref{thm:EA-comm}. However, the above state is a
classical--quantum state, and as such, the code can be derandomized.
Specifically, to do so, we can employ the analysis given in \cite[Section~3.3]%
{wilde2017position}, but replacing the square-root measurement there with the
sequential decoding strategy.\ This leads to the following result, which
generalizes one of the main results of \cite{WR12} to the infinite-dimensional case:

\begin{corollary}
\label{thm:unassisted-comm}Let $\mathcal{H}_{A}$, $\mathcal{H}_{B}$, and
$\mathcal{H}_{X}$ be separable Hilbert spaces. Let $\mathcal{N}_{A\rightarrow
B}$ be a quantum channel, taking $\mathcal{D}(\mathcal{H}_{A})$ to
$\mathcal{D}(\mathcal{H}_{B})$. Then the following bound holds:%
\begin{equation}
C(\mathcal{N}_{A\rightarrow B},1,\varepsilon)\geq I_{H}^{\varepsilon-\eta
}(X;B)_{\zeta}-\log_{2}(4\varepsilon/\eta^{2}),
\end{equation}
where $\varepsilon\in(0,1)$, $\eta\in(0,\varepsilon)$, $\zeta_{XB}%
\equiv\mathcal{N}_{A\rightarrow B}(\rho_{XA})$, and $\rho_{XA}\in
\mathcal{D}(\mathcal{H}_{X}\otimes\mathcal{H}_{A})$ is a bipartite,
classical--quantum state of the form in \eqref{eq:cq-state}.
\end{corollary}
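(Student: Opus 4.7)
The plan is to invoke Theorem~\ref{thm:EA-comm} with the classical--quantum resource state $\rho_{XA}$ from \eqref{eq:cq-state} and then derandomize the resulting entanglement-assisted code to produce an unassisted code of the same size and same error bound. This mirrors the approach of \cite[Section~3.3]{wilde2017position}, except that the square-root measurement used there will be replaced by the sequential decoder from the proof of Theorem~\ref{thm:EA-comm}, which is where our quantum union bound enters.

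First I would instantiate the position-based code of Theorem~\ref{thm:EA-comm} by taking the resource state $\rho_{RA}$ to be $\rho_{XA}$. Alice and Bob then share $M$ copies $\rho_{X_1A_1} \otimes \cdots \otimes \rho_{X_MA_M}$, with Bob holding the $X$ systems and Alice the $A$ systems. Since each $\rho_{X_iA_i}$ is block-diagonal in an orthonormal basis $\{|x\rangle\}$, the shared state is equivalent to shared classical randomness $X^M \sim \prod_i p(x_i)$, which both parties may access without loss of generality, together with Alice holding $\rho_A^{x_i}$ in register $A_i$. To encode message $m$, Alice sends $A_m$ through $\mathcal{N}_{A\to B}$, and Bob executes the sequential binary decoder built from the Naimark dilation of the optimal hypothesis-testing measurement $\Lambda_{XB}$. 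Theorem~\ref{thm:EA-comm} certifies that when $\log_2 M = I_H^{\varepsilon-\eta}(X;B)_\zeta - \log_2(4\varepsilon/\eta^2)$, the message-averaged error probability is at most $\varepsilon$.

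Next I would derandomize. Because the $X^M$ register is classical, the error probability decomposes as
\[
\bar p_{\mathrm e} \;=\; \sum_{x^M} \Bigl(\prod_{i=1}^M p(x_i)\Bigr)\, \bar p_{\mathrm e}(x^M),
\]
where $\bar p_{\mathrm e}(x^M)$ is the conditional average error when $X^M$ is fixed to the deterministic string $x^M$. By the standard averaging argument, there exists at least one realization $x^{M\ast} = (x_1^\ast, \ldots, x_M^\ast)$ with $\bar p_{\mathrm e}(x^{M\ast}) \leq \varepsilon$. Fixing this realization converts the scheme into a deterministic code: message $m$ is encoded as $\rho_A^{x_m^\ast}$ and Bob's sequential decoder becomes a sequence of binary measurements on the channel output $B$ (together with the auxiliary probe qubits) that depend only on the fixed string $x^{M\ast}$. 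Since no shared randomness is consumed, this is a valid unassisted $(1,M,\varepsilon)$ code in the sense of Section~\ref{sec:code-defs}, proving the claimed lower bound on $C(\mathcal{N}_{A\to B},1,\varepsilon)$.

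The main obstacle I anticipate is verifying that the classical--quantum structure is preserved cleanly by the sequential decoder: one must check that the projectors $\Pi_{R_i B P_i}$ from the proof of Theorem~\ref{thm:EA-comm} can be taken block-diagonal in the $X_i$ basis (which is harmless, by replacing $\Lambda_{XB}$ with its $X$-dephased version before applying Naimark), so that conditioning on classical values $x_i^\ast$ genuinely yields well-defined binary projective measurements on $B$ and $P_i$ and the error probability decomposes as a convex combination over $x^M$. Once that commutation is in place, the remainder is routine conditional-probability bookkeeping, essentially identical to the derandomization step in \cite[Section~3.3]{wilde2017position}.
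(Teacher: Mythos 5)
Your proposal is correct and follows essentially the same route the paper intends: instantiate Theorem~\ref{thm:EA-comm} with the classical--quantum resource state of \eqref{eq:cq-state} and derandomize as in \cite[Section~3.3]{wilde2017position}, with the square-root measurement replaced by the sequential decoder. The paper only sketches this argument by citation, and your added observation that $\Lambda_{XB}$ may be taken $X$-dephased (so the conditional sequential measurements are well defined) is exactly the bookkeeping that citation is meant to cover.
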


\subsection{Energy constraints}

It is common in the theory of communication over infinite-dimensional channels
\cite{H03,H04,H12book}\ to impose energy constraints on the space of inputs.
If we do not so, then the capacities can be infinite. The definitions of these
energy-constrained non-asymptotic capacities are the same as we discussed
previously, except that we impose energy constraints on the channel input states.

Before defining them, let us first recall the notion of an energy observable
\cite{H12book,HS12}:

\begin{definition}
[Energy Observable]For a Hilbert space $\mathcal{H}$, let $G\in\mathcal{L}%
_{+}(\mathcal{H})$ denote a positive semi-definite operator, defined in terms
of its action on a vector $|\psi\rangle$ as
\begin{equation}
G|\psi\rangle=\sum_{j=1}^{\infty}g_{j}|e_{j}\rangle\langle e_{j}|\psi\rangle,
\label{eq:energyObservable}%
\end{equation}
for $|\psi\rangle$ such that $\sum_{j=1}^{\infty}g_{j}|\langle e_{j}%
|\psi\rangle|^{2}<\infty$. In the above, $\{|e_{j}\rangle\}_{j}$ is an
orthonormal basis and $\{g_{j}\}_{j}$ is a sequence of non-negative, real
numbers. Then $\{|e_{j}\rangle\}_{j}$ is an eigenbasis for $G$ with
corresponding eigenvalues $\{g_{j}\}_{j}$.
\end{definition}

\begin{definition}
The $n$th extension $\overline{G}_{n}$\ of an energy observable $G$ is defined
as%
\begin{equation}
\overline{G}_{n}\equiv\frac{1}{n}\left[  G\otimes I\otimes\cdots\otimes
I+\cdots+I\otimes\cdots\otimes I\otimes G\right]  ,
\end{equation}
where $n$ is the number of factors in each tensor product above.
\end{definition}

Then the non-asymptotic, energy-constrained classical capacity $C(\mathcal{N}%
_{A\rightarrow B},G,P,n,\varepsilon)$ is defined exactly as it was previously
in Section~\ref{sec:code-defs}, except that we demand that%
\begin{equation}
\frac{1}{M}\sum_{m\in\mathcal{M}}\operatorname{Tr}\{\overline{G}_{n}%
\rho_{A^{n}}^{m}\}\leq P,
\end{equation}
for a real $P\in\lbrack0,\infty)$. Similarly, the non-asymptotic,
energy-constrained entanglement-assisted classical capacity
$C_{\operatorname{EA}}(\mathcal{N}_{A\rightarrow B},G,P,n,\varepsilon)$ is
defined exactly as it was previously in Section~\ref{sec:code-defs}, except
that we demand that%
\begin{equation}
\frac{1}{M}\sum_{m\in\mathcal{M}}\operatorname{Tr}\{(\overline{G}_{n}\otimes
I_{B^{\prime}})\mathcal{E}_{A^{\prime}\rightarrow A^{n}}^{m}(\Psi_{A^{\prime
}B^{\prime}})\}\leq P.
\end{equation}
One could alternatively demand that the energy constraint hold for every
codeword, not just on average. Note that we recover the usual notion of
capacity (unconstrained) by taking $G=I$ and setting $P=1$.

An advantage of the approach given in the proof of Theorem~\ref{thm:EA-comm}
is that we easily obtain a lower bound on the second-order coding rate for
energy-constrained entanglement-assisted classical communication over a
quantum channel:

\begin{theorem}
\label{thm:EA-comm-EC}Let $\mathcal{H}_{A}$, $\mathcal{H}_{B}$, and
$\mathcal{H}_{R}$ be separable Hilbert spaces. Let $\varepsilon\in(0,1)$. Let
$G$ be an energy observable, and let $P\in\lbrack0,\infty)$. Let
$\mathcal{N}_{A\rightarrow B}$ be a quantum channel, taking $\mathcal{D}%
(\mathcal{H}_{A})$ to $\mathcal{D}(\mathcal{H}_{B})$. Then the following bound
holds:%
\begin{equation}
C_{\operatorname{EA}}(\mathcal{N}_{A\rightarrow B},G,P,n,\varepsilon)\geq
I(R;B)_{\zeta}+\sqrt{\frac{1}{n}V(R;B)_{\zeta}}\Phi^{-1}(\varepsilon
)+O\!\left(  \frac{1}{n}\log n\right)  ,
\end{equation}
where $\zeta_{RB}\equiv\mathcal{N}_{A\rightarrow B}(\rho_{RA})$ and $\rho
_{RA}\in\mathcal{D}(\mathcal{H}_{R}\otimes\mathcal{H}_{A})$ is a bipartite
state such that%
\begin{equation}
I(R;B)_{\zeta},\ V(R;B)_{\zeta},\ T(R;B)_{\zeta}<\infty,\qquad V(R;B)_{\zeta
}>0,
\end{equation}
and $\operatorname{Tr}\{G\rho_{A}\}\leq P $. In the above, we have the mutual
information, mutual information variance, and another quantity:%
\begin{equation}
I(R;B)_{\zeta} \equiv D(\zeta_{RB}\Vert\zeta_{R}\otimes\zeta_{B}),\quad
V(R;B)_{\zeta} \equiv V(\zeta_{RB}\Vert\zeta_{R}\otimes\zeta_{B}),\quad
T(R;B)_{\zeta} \equiv T(\zeta_{RB}\Vert\zeta_{R}\otimes\zeta_{B}).
\end{equation}

\end{theorem}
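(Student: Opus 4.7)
The plan is to reduce the $n$-shot problem to the one-shot bound already established in Theorem~\ref{thm:EA-comm}, applied to the tensor-power channel $\mathcal{N}_{A\to B}^{\otimes n}$ with the tensor-power resource state $\rho_{RA}^{\otimes n}$. Concretely, I would treat $A^{n}$ as the single input of a big channel, $B^{n}$ as its output, and $R^{n}$ as the reference. Theorem~\ref{thm:EA-comm} then furnishes a $(1,M,\varepsilon)$ entanglement-assisted code for this big channel, i.e.\ an $(n,M,\varepsilon)$ code for $\mathcal{N}_{A\to B}$, with
\begin{equation}
\log_{2} M \;\geq\; I_{H}^{\varepsilon-\eta}(R^{n};B^{n})_{\zeta^{\otimes n}} - \log_{2}(4\varepsilon/\eta^{2}),
\end{equation}
for any $\eta\in(0,\varepsilon)$. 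The next step is to replace the one-shot hypothesis testing mutual information by its asymptotic expansion: since $I_{H}^{\varepsilon-\eta}(R^{n};B^{n})_{\zeta^{\otimes n}} = D_{H}^{\varepsilon-\eta}(\zeta_{RB}^{\otimes n}\,\Vert\,(\zeta_{R}\otimes \zeta_{B})^{\otimes n})$, and the hypotheses $I(R;B)_{\zeta},V(R;B)_{\zeta},T(R;B)_{\zeta}<\infty$ with $V(R;B)_{\zeta}>0$ are exactly the conditions needed to invoke the second-order expansion \eqref{eq:hypo-expand} in the separable Hilbert space setting, we obtain
\begin{equation}
I_{H}^{\varepsilon-\eta}(R^{n};B^{n})_{\zeta^{\otimes n}} \;=\; n\,I(R;B)_{\zeta} + \sqrt{n\,V(R;B)_{\zeta}}\,\Phi^{-1}(\varepsilon-\eta) + O(\log n).
\end{equation}

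To get $\Phi^{-1}(\varepsilon)$ rather than $\Phi^{-1}(\varepsilon-\eta)$ in the final bound, I would pick $\eta$ vanishing with $n$, for example $\eta = 1/\sqrt{n}$. A first-order Taylor expansion of $\Phi^{-1}$ about $\varepsilon$ gives $\Phi^{-1}(\varepsilon-\eta) = \Phi^{-1}(\varepsilon) + O(1/\sqrt{n})$, so that $\sqrt{nV}\,\Phi^{-1}(\varepsilon-\eta) = \sqrt{nV}\,\Phi^{-1}(\varepsilon) + O(1)$, while the correction term satisfies $-\log_{2}(4\varepsilon/\eta^{2}) = -\log_{2}(4\varepsilon n) = O(\log n)$. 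Dividing the resulting estimate of $\log_{2} M$ by $n$ yields
\begin{equation}
\tfrac{1}{n}\log_{2} M \;\geq\; I(R;B)_{\zeta} + \sqrt{\tfrac{V(R;B)_{\zeta}}{n}}\,\Phi^{-1}(\varepsilon) + O\!\left(\tfrac{\log n}{n}\right),
\end{equation}
which is the advertised inequality.

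It remains to check the energy constraint. In the position-based scheme used to prove Theorem~\ref{thm:EA-comm}, applied at block length $n$, Alice's encoding channel simply extracts the $m$th of $M$ shared copies of $\rho_{R^{n}A^{n}}=\rho_{RA}^{\otimes n}$ and feeds the $A^{n}$ systems into $\mathcal{N}^{\otimes n}$. Consequently, for every message $m$ the marginal input state on $A^{n}$ equals $\rho_{A}^{\otimes n}$, and thus
\begin{equation}
\operatorname{Tr}\{\overline{G}_{n}\,\rho_{A}^{\otimes n}\} \;=\; \operatorname{Tr}\{G\,\rho_{A}\} \;\leq\; P,
\end{equation}
so the per-codeword (and a fortiori the average) energy constraint is automatically satisfied by hypothesis on $\rho_{RA}$. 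The main obstacle is really subsumed in ingredients already available in the paper: one needs the $\geq$ direction of \eqref{eq:hypo-expand} in the separable Hilbert space setting, which is precisely what Appendix~\ref{app:hypo-expand} supplies, together with Theorem~\ref{thm:EA-comm}, whose proof was written so as to apply directly to separable Hilbert spaces. Beyond that, the argument is purely bookkeeping: choosing $\eta$ to trade off the $\Phi^{-1}$ shift against the $-\log_{2}(4\varepsilon/\eta^{2})$ penalty, both of which must be driven into the $O((\log n)/n)$ remainder after dividing by $n$.
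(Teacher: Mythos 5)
Your proposal is correct and follows essentially the same route as the paper, whose proof is exactly the one-line reduction you describe: apply Theorem~\ref{thm:EA-comm} to $\mathcal{N}_{A\rightarrow B}^{\otimes n}$ with resource state $\rho_{RA}^{\otimes n}$, pick $\eta=1/\sqrt{n}$, and invoke the expansion in \eqref{eq:hypo-expand} (whose $\geq$ direction is supplied by Appendix~\ref{app:hypo-expand}). Your explicit bookkeeping of the $\Phi^{-1}(\varepsilon-\eta)$ shift and your verification that each codeword's channel input is $\rho_{A}^{\otimes n}$, so that $\operatorname{Tr}\{\overline{G}_{n}\rho_{A}^{\otimes n}\}=\operatorname{Tr}\{G\rho_{A}\}\leq P$, simply spells out details the paper leaves implicit.
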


\begin{proof}
Let $\zeta_{RA}$ be a state satisfying the conditions stated above. Then the
claim follows by applying Theorem~\ref{thm:EA-comm}, picking $\eta=1/\sqrt{n}%
$, and invoking the expansion in \eqref{eq:hypo-expand}.
\end{proof}

\bigskip

The proof given above is quite simple once all of the relevant components are
in place (namely, the quantum union bound from Theorem~\ref{thm:main-result},
position-based coding \cite{anshu2017one}, and the expansion in
\eqref{eq:hypo-expand}). This is to be contrasted with the approach taken in
\cite{H03,H04}, in which the energy-constrained entanglement-assisted
classical capacity was identified. Not only can we argue to have a simpler
approach for the achievability part, but our method also delivers a lower
bound on the second-order coding rate. An important open question remaining
however is to determine whether this lower bound on the second-order coding
rate is tight. To our knowledge, this tightness has only been shown for
finite-dimensional channels that are covariant \cite{DTW14}.

We note here that the bound in Theorem~\ref{thm:EA-comm-EC} applies to the
practically relevant case of bosonic Gaussian channels \cite{S17}. The
energy-constrained entanglement-assisted classical capacity of these channels
was identified in \cite{HW01,H03,GLMS03,H04}. The additive-noise, thermal, and
amplifier channels are of major interest for applications, as stressed in
\cite{HG12,H12book}. It is known that the energy-constrained,
entanglement-assisted capacity formula for these channels is achieved by a
two-mode squeezed vacuum state, whose reduction to the channel input system
has an average photon number meeting the desired photon number constraint.
Thus, we could evaluate the lower bound in Theorem~\ref{thm:EA-comm-EC} by
taking $\rho_{RA}$ therein to be the two-mode squeezed vacuum and then
applying the formula from \cite{BLTW16}\ to evaluate the mutual information
variance $V(R;B)_{\zeta}$, while noting that the quantity $T(R;B)_{\zeta}$ is
finite for any finite-energy state, as proven in \cite[Appendix~D]{KW17a}.

We end this section on a different note, by remarking that the same argument
as above gives a non-trivial lower bound on the second-order coding rate for
energy-constrained classical communication with randomness assistance.
However, it remains open to determine whether this rate is achievable without
the assistance of randomness. It is also open to extend the result to a
continuous (uncountable)\ index set $\mathcal{X}$. We suspect that these
extensions should be possible but leave it for future work.

\section{Conclusion}

\label{sec:concl}In this paper, we proved Theorem~\ref{thm:main-result}, which
improves Gao's quantum union bound to include a tunable parameter $c>0$ that
plays a role similar to the tunable parameter available in the
Hayashi--Nagaoka inequality from \cite[Lemma~2]{HN03}. An advantage of the
proof of Theorem~\ref{thm:main-result} is that it is elementary, relying only
on basic properties of projectors, the Pythagorean theorem, and the
Cauchy--Schwarz inequality. Due to our improvement, the quantum union bound
can now be employed in a wider variety of scenarios. As an example
application, we showed how to achieve a lower bound on the second-order coding
rate for classical communication over a quantum channel by employing a
sequential decoding strategy.

For future directions, we think it would be interesting to determine whether
the improved bound in Theorem~\ref{thm:main-result}\ would find application in
areas
including quantum algorithms \cite{MW16,A16,HYM17}, quantum complexity theory
\cite{A16,WHKN18}, and Hamiltonian complexity theory \cite{AAV16,KL18}. We
also wonder whether Theorem~\ref{thm:main-result} could be useful outside of
quantum information, for example in the analysis of projection algorithms.

\appendix

\section{Proof of the inequality $\geq$ in Eq.~\eqref{eq:hypo-expand}}

\label{app:hypo-expand}

The goal of this appendix is to prove the inequality $\geq$ in
~\eqref{eq:hypo-expand}. The proof follows the development in Appendix~C of
\cite{DPR15} very closely, which is in turn based on \cite[Section~3.2]{li12}.

Consider quantum states $\rho$ and $\sigma$ acting on a separable Hilbert
space $\mathcal{H}$, with spectral decompositions as given in
\eqref{eq:spec-decomp-rho-sig}. Observe that each $P_{x}$ is
finite-dimensional, as a consequence of the fact that $\rho$ is trace class.
Indeed, were it not the case, then $\operatorname{Tr}\{P_{x}\}$ would be
infinite, and $\rho$ could thus not be trace class. By the same reasoning,
each $Q_{y}$ is finite-dimensional.

By defining a random variable $Z$ taking values $\log_{2}(\lambda_{x}/\mu
_{y})$ with probability $p(x,y)\equiv\lambda_{x}\operatorname{Tr}\{P_{x}%
Q_{y}\}$, observe that%
\begin{equation}
D(\rho\Vert\sigma)=\mathbb{E}\{Z\},\qquad V(\rho\Vert\sigma)=\text{Var}%
\{Z\},\qquad T(\rho\Vert\sigma)=\mathbb{E}\left\{  \left\vert Z-\mathbb{E}%
\{Z\}\right\vert ^{3}\right\}  , \label{eq:mean-of-Z}%
\end{equation}
where $D(\rho\Vert\sigma)$, $V(\rho\Vert\sigma)$, and $T(\rho\Vert\sigma)$ are
defined in \eqref{eq:rel-ent-def}-\eqref{eq:rel-ent-T-def}.

\begin{lemma}
\label{lem:ke-li-sep-hilb} Let $\rho$ and $\sigma$ denote states acting on a
separable Hilbert space $\mathcal{H}$. Let $L>0$. Then there exists a
measurement operator $T_{L}$ (i.e., $0\leq T_{L}\leq I$) such that%
\begin{equation}
\operatorname{Tr}\{T_{L}\rho\} \geq\Pr\{Z\geq\log_{2} L\},\qquad
\qquad\operatorname{Tr}\{T_{L}\sigma\} \leq\frac{1}{L},
\end{equation}
where $Z$ is the random variable defined just before \eqref{eq:mean-of-Z}.
\end{lemma}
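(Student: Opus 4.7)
The plan is to follow the construction of Li \cite[Section~3.2]{li12}, as adapted to separable Hilbert spaces in \cite[Appendix~C]{DPR15}. The idea is to build $T_L$ as a spectral projection after pinching $\rho$ with respect to the eigenbasis of $\sigma$, and to dispose of the infinite-dimensional aspects by truncation at the end.

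First I would define $\tilde{\rho} \equiv \sum_y Q_y \rho Q_y$, the pinching of $\rho$ by the spectral projections of $\sigma$. Since $\sigma = \sum_y \mu_y Q_y$ is block-diagonal with respect to $\{Q_y\}$ and so is $\tilde{\rho}$, and each $Q_y$ is finite-dimensional (as noted at the start of the appendix), $\tilde{\rho}$ and $\sigma$ can be simultaneously diagonalized within each block, so the spectral projection $T_L \equiv \{\tilde{\rho} \geq L\sigma\}$ is a well-defined operator with $0 \leq T_L \leq I$. The $\sigma$-trace bound then follows immediately: on the range of $T_L$ one has $L\sigma \leq \tilde{\rho}$, so $L \operatorname{Tr}\{T_L \sigma\} \leq \operatorname{Tr}\{T_L \tilde{\rho}\} \leq \operatorname{Tr}\{\tilde{\rho}\} = 1$. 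Moreover, $T_L$ commutes with each $Q_y$, so $T_L = \sum_y Q_y T_L Q_y$, and hence $\operatorname{Tr}\{T_L \rho\} = \sum_y \operatorname{Tr}\{T_L Q_y \rho Q_y\} = \operatorname{Tr}\{T_L \tilde{\rho}\}$.

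The main obstacle is to establish $\operatorname{Tr}\{T_L \tilde{\rho}\} \geq \Pr\{Z \geq \log_2 L\}$, since $Z$ is defined via the original eigenvalues $\lambda_x, \mu_y$ whereas $T_L$ selects eigenvectors of the pinched operator $\tilde{\rho}$. The plan is to relate both sides block-by-block to the classical distributions $p(x,y) = \lambda_x \operatorname{Tr}\{P_x Q_y\}$ and $q(x,y) = \mu_y \operatorname{Tr}\{P_x Q_y\}$ on $\mathcal{X} \times \mathcal{Y}$, and to perform the spectral-averaging argument of \cite[Section~3.2]{li12} showing that the pinched quantum test dominates the classical Neyman-Pearson test on the likelihood ratio $\lambda_x / \mu_y$. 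Concretely, within each block one compares $\sum_k \alpha_{y,k} \mathbf{1}[\alpha_{y,k} \geq L\mu_y]$, where $\{\alpha_{y,k}\}_k$ are the eigenvalues of $Q_y \rho Q_y$, against the contribution $\sum_x \lambda_x \operatorname{Tr}\{P_x Q_y\} \mathbf{1}[\lambda_x \geq L\mu_y]$ of block $y$ to $\Pr\{Z \geq \log_2 L\}$; the required inequality is obtained from a convexity/majorization argument combined with the identity $\operatorname{Tr}\{Q_y \rho Q_y\} = \sum_x \lambda_x \operatorname{Tr}\{P_x Q_y\}$.

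Finally, for the separable Hilbert space extension, I would apply the finite-dimensional result to truncations of $\rho$ and $\sigma$ along increasing finite subsets $\mathcal{X}_0 \subset \mathcal{X}$ and $\mathcal{Y}_0 \subset \mathcal{Y}$, using the finite-dimensionality of each $P_x$ and $Q_y$ to ensure that the truncated states converge in trace norm and that the random variable $Z$ truncates accordingly. Passing to the limit along an expanding sequence of such truncations then yields the claimed inequalities for the original $\rho$ and $\sigma$, following the approach of \cite[Appendix~C]{DPR15}.
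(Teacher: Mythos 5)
Your construction is not the paper's, and its crucial middle step is false. The $\sigma$-bound and the identity $\operatorname{Tr}\{T_{L}\rho\}=\operatorname{Tr}\{T_{L}\tilde{\rho}\}$ for the pinched test $T_{L}=\{\tilde{\rho}\geq L\sigma\}$ are fine, but the claimed block-wise domination
\begin{equation}
\sum_{k}\alpha_{y,k}\,\mathbf{1}[\alpha_{y,k}\geq L\mu_{y}]\ \geq\ \sum_{x}\lambda_{x}\operatorname{Tr}\{P_{x}Q_{y}\}\,\mathbf{1}[\lambda_{x}\geq L\mu_{y}]
\end{equation}
is simply not true, and neither is its sum over blocks. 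Counterexample: on $\mathbb{C}^{2}$ let $\rho$ have eigenvalues $0.9,0.1$ with eigenvectors $|\psi_{1}\rangle,|\psi_{2}\rangle$, let $\sigma=0.6\,|e_{1}\rangle\langle e_{1}|+0.4\,|e_{2}\rangle\langle e_{2}|$ with $|\langle e_{1}|\psi_{1}\rangle|^{2}=0.1$, and take $L=1.4$. Then $\tilde{\rho}=0.18\,|e_{1}\rangle\langle e_{1}|+0.82\,|e_{2}\rangle\langle e_{2}|$ and $\{\tilde{\rho}\geq L\sigma\}=|e_{2}\rangle\langle e_{2}|$ (since $0.18<0.84$ while $0.82\geq0.56$), so $\operatorname{Tr}\{T_{L}\rho\}=0.82$; however $\Pr\{Z\geq\log_{2}L\}=p(1,1)+p(1,2)=0.09+0.81=0.9$, because $\lambda_{1}/\mu_{1}=1.5$ and $\lambda_{1}/\mu_{2}=2.25$ both exceed $1.4$. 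The conceptual reason is that pinching averages the eigenvalues of $\rho$ within a $\sigma$-block, and this averaging can push weight originating from large $\lambda_{x}$ below the threshold $L\mu_{y}$ (block $y=1$ above); no convexity or majorization argument can recover it, since the inequality you would need is itself false. So the pinched Neyman--Pearson test cannot prove the lemma as stated.

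The paper's proof (following Li and Datta--Pautrat--Rouz\'e) uses a different operator: it sets $\widetilde{T}_{L}\equiv\sum_{x,y:\,\lambda_{x}\geq L\mu_{y}}Q_{y}P_{x}Q_{y}$ and takes $T_{L}$ to be the projection onto the \emph{support} of $\widetilde{T}_{L}$, which is not a function of the pinched state. The lower bound on $\operatorname{Tr}\{T_{L}\rho\}$ follows because, for each unit eigenvector $|\psi\rangle$ of $\rho$ inside $P_{x}$, the vectors $Q_{y}|\psi\rangle$ with $\lambda_{x}\geq L\mu_{y}$ lie in $\operatorname{supp}(\widetilde{T}_{L})$ and are mutually orthogonal, giving $\operatorname{Tr}\{T_{L}P_{x}\}\geq\sum_{y:\,\lambda_{x}\geq L\mu_{y}}\operatorname{Tr}\{Q_{y}P_{x}\}$ and hence $\operatorname{Tr}\{T_{L}\rho\}\geq\Pr\{Z\geq\log_{2}L\}$; the upper bound on $\operatorname{Tr}\{T_{L}\sigma\}$ follows from a Gram--Schmidt decomposition $T_{L}=\sum_{x}P_{S_{x}}$ with $Q_{y}P_{S_{x}}=0$ unless $\lambda_{x}\geq L\mu_{y}$ and $\operatorname{Tr}\{P_{S_{x}}\}\leq\operatorname{Tr}\{P_{x}\}$, so that $\mu_{y}\leq\lambda_{x}/L$ can be inserted termwise to get $\operatorname{Tr}\{T_{L}\sigma\}\leq1/L$. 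Note also that this argument works directly in the separable-Hilbert-space setting, because each $P_{x}$ and $Q_{y}$ is finite-dimensional ($\rho$ and $\sigma$ being trace class), so the truncation-and-limit step you append is unnecessary---and in your route it would in any case not repair the failed inequality above.
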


\begin{proof}
Let us define the positive semi-definite operator $\widetilde{T}_{L}$ as%
\begin{equation}
\widetilde{T}_{L}\equiv\sum_{x,y:L\leq\lambda_{x}/\mu_{y}}Q_{y}P_{x}Q_{y}.
\end{equation}
By inspection, this operator is positive semi-definite. The measurement
operator $T_{L}$ is then defined to be the projection onto the support of
$\widetilde{T}_{L}$. Let $|\psi\rangle$ be a unit vector such that $P_{x}%
|\psi\rangle=|\psi\rangle$ for some $x$. It follows that $|\psi\rangle
\langle\psi|\leq P_{x}$. Then, for any $\mu_{y}$ such that $L\leq\lambda
_{x}/\mu_{y}$, we have, from the definition of $\widetilde{T}_{L}$, that
$|\psi\rangle\langle\psi|\leq\widetilde{T}_{L}$. This in turn implies that
$Q_{y}|\psi\rangle\in\operatorname{supp}(\widetilde{T}_{L})$. From this, we
then conclude that%
\begin{equation}
\frac{Q_{y}|\psi\rangle\langle\psi|Q_{y}}{\left\Vert Q_{y}|\psi\rangle
\right\Vert ^{2}}\leq T_{L}.
\end{equation}
Furthermore, we have that $\{Q_{y}|\psi\rangle\}_{y}$ forms a family of
orthogonal vectors. Then the following inequality holds%
\begin{equation}
\sum_{y:L\leq\lambda_{x}/\mu_{y}}\frac{Q_{y}|\psi\rangle\langle\psi|Q_{y}%
}{\left\Vert Q_{y}|\psi\rangle\right\Vert ^{2}}\leq T_{L}.
\end{equation}
From this, we conclude that%
\begin{align}
\operatorname{Tr}\{T_{L}|\psi\rangle\langle\psi|\} &  \geq\operatorname{Tr}%
\left\{  \sum_{y:L\leq\lambda_{x}/\mu_{y}}\frac{Q_{y}|\psi\rangle\langle
\psi|Q_{y}}{\left\Vert Q_{y}|\psi\rangle\right\Vert ^{2}}|\psi\rangle
\langle\psi|\right\}  =\sum_{y:L\leq\lambda_{x}/\mu_{y}}\operatorname{Tr}%
\left\{  \frac{Q_{y}|\psi\rangle\langle\psi|Q_{y}}{\left\Vert Q_{y}%
|\psi\rangle\right\Vert ^{2}}|\psi\rangle\langle\psi|\right\}
\label{eq:TL-on-psi-1}\\
&  =\sum_{y:L\leq\lambda_{x}/\mu_{y}}\frac{\left\Vert Q_{y}|\psi
\rangle\right\Vert ^{4}}{\left\Vert Q_{y}|\psi\rangle\right\Vert ^{2}}%
=\sum_{y:L\leq\lambda_{x}/\mu_{y}}\left\Vert Q_{y}|\psi\rangle\right\Vert
^{2}=\sum_{y:L\leq\lambda_{x}/\mu_{y}}\operatorname{Tr}\{Q_{y}|\psi
\rangle\langle\psi|Q_{y}\}.\label{eq:TL-on-psi-4}%
\end{align}
Now, recall that $P_{x}$ is a finite-dimensional projector for each $x$. (As
stated above, if $P_{x}$ were not finite-dimensional, then this would
contradict the assumption that $\rho$ is trace class.) Furthermore, we can
write it as $P_{x}=\sum_{l=1}^{\operatorname{Tr}\{P_{x}\}}|\psi_{x,l}%
\rangle\langle\psi_{x,l}|$, for some orthonormal set $\{|\psi_{x,l}%
\rangle\}_{l=1}^{\operatorname{Tr}\{P_{x}\}}$. Then the development in
\eqref{eq:TL-on-psi-1}--\eqref{eq:TL-on-psi-4}\ implies that%
\begin{equation}
\operatorname{Tr}\{T_{L}P_{x}\}\geq\sum_{y:L\leq\lambda_{x}/\mu_{y}%
}\operatorname{Tr}\{Q_{y}P_{x}Q_{y}\}=\sum_{y:L\leq\lambda_{x}/\mu_{y}%
}\operatorname{Tr}\{Q_{y}P_{x}\}.
\end{equation}
We can then use this to conclude that%
\begin{align}
\operatorname{Tr}\{T_{L}\rho\} &  =\sum_{x}\lambda_{x}\operatorname{Tr}%
\{T_{L}P_{x}\}\geq\sum_{x,y:L\leq\lambda_{x}/\mu_{y}}\lambda_{x}%
\operatorname{Tr}\{Q_{y}P_{x}\}\\
&  =\sum_{x,y:\log_{2}L\leq\log_{2}(\lambda_{x}/\mu_{y})}\lambda
_{x}\operatorname{Tr}\{Q_{y}P_{x}\}=\Pr\{Z\geq\log_{2}L\},
\end{align}
where the second equality uses the fact that $\log_{2}:(0,\infty
)\rightarrow(-\infty,\infty)$ is invertible, and the last line follows from
the definition of the random variable~$Z$, given just before \eqref{eq:mean-of-Z}.

What remains is to place an upper bound on $\operatorname{Tr}\{T_{L}\sigma\}$.
Observe that for all $x$ and $y$, the following equivalence holds%
\begin{equation}
\operatorname{ran}(Q_{y}P_{x})=\operatorname{supp}(Q_{y}P_{x}Q_{y}),
\end{equation}
where $\operatorname{ran}$ denotes the range of an operator. For some $x$,
define the following subspace:%
\begin{equation}
\widetilde{S}_{x}\equiv\bigvee\limits_{y:L\leq\lambda_{x}/\mu_{y}%
}\operatorname{ran}(Q_{y}P_{x})=\bigvee\limits_{y:L\leq\lambda_{x}/\mu_{y}%
}\operatorname{supp}(Q_{y}P_{x}Q_{y}),
\end{equation}
where the operation $\vee$ realizes a space formed as the union of subspaces.
Due to the fact that $P_{x}$ is finite-dimensional, it follows that the
subspace $\widetilde{S}_{x}$ is finite-dimensional. We now employ a
Gram--Schmidt orthogonalization procedure for these subspaces. First order the
eigenvalues of $\rho$ as $\lambda_{1}>\lambda_{2}>\ldots$. Now define a family
$\{S_{x}\}_{x}$ of subspaces of the whole Hilbert space $\mathcal{H}$ as%
\begin{equation}
S_{1}\equiv\widetilde{S}_{1},\qquad\qquad S_{x}\equiv\left(  \bigvee
\limits_{i=1}^{x}\widetilde{S}_{i}\right)  \wedge\left(  \bigvee
\limits_{i=1}^{x-1}\widetilde{S}_{i}\right)  ^{\perp}\text{ for }x\geq2,
\end{equation}
where the operation $\wedge$ corresponds to the intersection of subspaces. The
subspaces $S_{x}$ are mutually orthogonal by construction. Furthermore, by the
procedure given above, the following holds for any positive integer $w\geq1$:%
\begin{equation}
\bigvee\limits_{x=1}^{w}\widetilde{S}_{x}=\bigvee\limits_{x=1}^{w}S_{x}.
\end{equation}
By definition, $T_{L}$ is the projection onto the following subspace:%
\begin{equation}
\bigvee\limits_{x,y:L\leq\lambda_{x}/\mu_{y}}\operatorname{supp}(Q_{y}%
P_{x}Q_{y})=\bigvee\limits_{y:L\leq\lambda_{x}/\mu_{y}}\operatorname{ran}%
(Q_{y}P_{x})=\bigvee\limits_{x}\widetilde{S}_{x}=\bigvee\limits_{x}%
S_{x}=\bigoplus\limits_{x}S_{x}.
\end{equation}
Thus, it follows that $T_{L}$ can be written as $T_{L}=\sum_{x}P_{S_{x}}$,
where $P_{S_{x}}$ is the projection onto $S_{x}$. By the procedure given
above, we have that $S_{x}\subseteq\widetilde{S}_{x}$, and from the definition
of $\widetilde{S}_{x}$, we find that%
\begin{equation}
\operatorname{Tr}\{P_{S_{x}}\}\leq\operatorname{Tr}\{P_{\widetilde{S}_{x}%
}\}\leq\operatorname{Tr}\{P_{x}\}.\label{eq:proj-ordering-GS}%
\end{equation}
We then find that%
\begin{align}
\operatorname{Tr}\{T_{L}\sigma\} &  =\sum_{y,x}\mu_{y}\operatorname{Tr}%
\{Q_{y}P_{S_{x}}\}=\sum_{y,x:L\leq\lambda_{x}/\mu_{y}}\mu_{y}\operatorname{Tr}%
\{Q_{y}P_{S_{x}}\}\\
&  \leq\frac{1}{L}\sum_{y,x:L\leq\lambda_{x}/\mu_{y}}\lambda_{x}%
\operatorname{Tr}\{Q_{y}P_{S_{x}}\}\leq\frac{1}{L}\sum_{y,x}\lambda
_{x}\operatorname{Tr}\{Q_{y}P_{S_{x}}\}\\
&  =\frac{1}{L}\sum_{x}\lambda_{x}\operatorname{Tr}\{P_{S_{x}}\}\leq\frac
{1}{L}\sum_{x}\lambda_{x}\operatorname{Tr}\{P_{x}\}=\frac{1}{L}%
\operatorname{Tr}\{\rho\}=\frac{1}{L}.
\end{align}
In the above, the second equality follows because $Q_{y}P_{S_{x}}=0$ unless
$L\leq\lambda_{x}/\mu_{y}$ (from the definition of the space $S_{x}$ and the
fact that $S_{x}\subseteq\widetilde{S}_{x}$). The third equality follows from
$\sum_{y}Q_{y}=I$, and the third inequality follows from~\eqref{eq:proj-ordering-GS}.
\end{proof}

\bigskip We now apply the above lemma to the i.i.d.~states $\rho^{\otimes n}$
and $\sigma^{\otimes n}$, with spectral decompositions%
\begin{equation}
\rho^{\otimes n}=\sum_{x^{n}}\lambda_{x^{n}}P_{x^{n}},\qquad\sigma^{\otimes
n}=\sum_{y^{n}}\mu_{y^{n}}Q_{y^{n}},
\end{equation}
where $x^{n}=\left(  x_{1},\ldots,x_{n}\right)  $, $y^{n}=\left(  y_{1}%
,\ldots,y_{n}\right)  $, $\lambda_{x^{n}}=\lambda_{x_{1}}\times\cdots
\times\lambda_{x_{n}}$, $\mu_{y^{n}}=\mu_{y_{1}}\times\cdots\times\mu_{y_{n}}%
$, $P_{x^{n}}=P_{x_{1}}\otimes\cdots\otimes P_{x_{n}}$, and $Q_{y^{n}%
}=Q_{y_{1}}\otimes\cdots\otimes Q_{y_{n}}$. Then the i.i.d.~random sequence
$Z^{n}\equiv(Z_{1},\ldots,Z_{n})$ takes the values%
\begin{equation}
\log_{2}\!\left(  \frac{\lambda_{x^{n}}}{\mu_{y^{n}}}\right)  =\sum_{i=1}%
^{n}\log_{2}\!\left(  \frac{\lambda_{x_{i}}}{\mu_{y_{i}}}\right)  ,
\label{eq:rand-Z-sequence}%
\end{equation}
with probability%
\begin{equation}
p(x^{n},y^{n})=\lambda_{x^{n}}\operatorname{Tr}\{P_{x^{n}}Q_{y^{n}}%
\}=\prod\limits_{i=1}^{n}\lambda_{x_{i}}\operatorname{Tr}\{P_{x_{i}}Q_{y_{i}%
}\}. \label{eq:rand-Z-sequence-2}%
\end{equation}

The Berry--Essen theorem \cite{KS10,S11a}\ states that if $A_{1}$, \ldots,
$A_{n}$ are i.i.d.~random variables such that $\mathbb{E}\{A_{1}\}=0$,
$\mathbb{E}\{\left\vert A_{1}\right\vert ^{2}\}\equiv\tau^{2}\in(0,\infty)$,
and $\mathbb{E}\{\left\vert A_{1}\right\vert ^{3}\}\equiv\omega<\infty$, then%
\begin{equation}
\left\vert \Pr\{B_{n}\sqrt{n}/\tau\leq x\}-\Phi(x)\right\vert \leq
\frac{C\omega}{\tau^{3}\sqrt{n}},
\end{equation}
where $x\in\mathbb{R}$, $\Phi(x)\equiv\left[  2\pi\right]  ^{-1/2}%
\int_{-\infty}^{x}dy\exp(-y^{2}/2)$, $B_{n}\equiv\frac{1}{n}\sum_{i=1}%
^{n}A_{i}$, and $C$ is the Berry--Esseen constant satisfying $0.40973\leq
C\leq0.4784$.

\begin{proposition}
Let $\rho$ and $\sigma$ denote states acting on a separable Hilbert
space~$\mathcal{H}$. Suppose that $D(\rho\Vert\sigma),V(\rho\Vert
\sigma),T(\rho\Vert\sigma)<\infty$ and $V(\rho\Vert\sigma)>0$. Suppose $n$ is
sufficiently large such that $\varepsilon-C\cdot T(\rho\Vert\sigma
)/\sqrt{n\left[  V(\rho\Vert\sigma)\right]  ^{3}}>0$. Then%
\begin{align}
D_{H}^{\varepsilon}(\rho^{\otimes n}\Vert\sigma^{\otimes n}) &  \geq
nD(\rho\Vert\sigma)+\sqrt{nV(\rho\Vert\sigma)}\Phi^{-1}\!\left(
\varepsilon-C\cdot T(\rho\Vert\sigma)/\sqrt{n\left[  V(\rho\Vert
\sigma)\right]  ^{3}}\right)  \label{eq:2nd-order-expansion-final}\\
&  =nD(\rho\Vert\sigma)+\sqrt{nV(\rho\Vert\sigma)}\Phi^{-1}\!\left(
\varepsilon\right)  +O(1).
\end{align}

\end{proposition}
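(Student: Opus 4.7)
The plan is to apply Lemma~\ref{lem:ke-li-sep-hilb} to the i.i.d.\ states $\rho^{\otimes n}$ and $\sigma^{\otimes n}$ at a carefully chosen threshold $L$, and then to invoke the Berry--Esseen theorem to convert the tail probability $\Pr\{\sum_i Z_i \geq \log_2 L\}$ into a Gaussian statement involving $\Phi^{-1}$. Once this is done, the lemma furnishes a feasible operator for the optimization defining $D_H^\varepsilon$ whose overlap with $\sigma^{\otimes n}$ is bounded by $1/L$, which immediately produces the advertised lower bound.

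Concretely, I would set
\begin{equation}
\log_{2}L \equiv nD(\rho\Vert\sigma) + \sqrt{nV(\rho\Vert\sigma)}\,\Phi^{-1}\!\left(\varepsilon - \delta_n\right), \qquad \delta_n \equiv \frac{C\,T(\rho\Vert\sigma)}{\sqrt{n[V(\rho\Vert\sigma)]^{3}}}.
\end{equation}
The standing hypothesis $\varepsilon - \delta_n > 0$ ensures that $\Phi^{-1}(\varepsilon - \delta_n)$ is well defined. Applying Lemma~\ref{lem:ke-li-sep-hilb} to $\rho^{\otimes n}$ and $\sigma^{\otimes n}$, whose spectral decomposition is the i.i.d.\ one recorded in \eqref{eq:rand-Z-sequence}--\eqref{eq:rand-Z-sequence-2}, produces a measurement operator $T_{L}$ satisfying
\begin{equation}
\operatorname{Tr}\{T_{L}\,\rho^{\otimes n}\} \geq \Pr\!\left\{\textstyle\sum_{i=1}^{n}Z_{i}\geq\log_{2}L\right\}, \qquad \operatorname{Tr}\{T_{L}\,\sigma^{\otimes n}\}\leq \frac{1}{L}.
\end{equation}

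Next, I would center the i.i.d.\ variables via $A_{i} \equiv Z_{i}-D(\rho\Vert\sigma)$, noting from \eqref{eq:mean-of-Z} that they have mean zero, variance $V(\rho\Vert\sigma)\in(0,\infty)$, and third absolute moment $T(\rho\Vert\sigma)<\infty$. The Berry--Esseen theorem then yields
\begin{equation}
\Pr\!\left\{\textstyle\sum_{i=1}^{n}Z_{i} < \log_{2}L\right\} = \Pr\!\left\{\frac{\sum_{i=1}^{n}A_{i}}{\sqrt{nV(\rho\Vert\sigma)}} < \Phi^{-1}(\varepsilon - \delta_n)\right\} \leq \Phi\!\left(\Phi^{-1}(\varepsilon - \delta_n)\right) + \delta_n = \varepsilon,
\end{equation}
so that $\operatorname{Tr}\{T_{L}\,\rho^{\otimes n}\} \geq 1-\varepsilon$. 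Hence $T_{L}$ is feasible in the optimization defining $D_{H}^{\varepsilon}(\rho^{\otimes n}\Vert\sigma^{\otimes n})$, and taking $-\log_{2}$ of the bound on $\operatorname{Tr}\{T_{L}\sigma^{\otimes n}\}$ gives $D_{H}^{\varepsilon}(\rho^{\otimes n}\Vert\sigma^{\otimes n}) \geq \log_{2}L$, which is precisely \eqref{eq:2nd-order-expansion-final}.

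Finally, to obtain the $O(1)$ form in the second line, I would Taylor expand $\Phi^{-1}$ about $\varepsilon$: since $\varepsilon \in (0,1)$ the derivative $(\Phi^{-1})'(\varepsilon) = \sqrt{2\pi}\exp([\Phi^{-1}(\varepsilon)]^{2}/2)$ is finite, and $\delta_n = O(n^{-1/2})$, so $\Phi^{-1}(\varepsilon - \delta_n) = \Phi^{-1}(\varepsilon) + O(n^{-1/2})$. Multiplying through by $\sqrt{nV(\rho\Vert\sigma)}$ absorbs the correction into an additive $O(1)$ term. The hard part of the whole argument has really been dealt with already in Lemma~\ref{lem:ke-li-sep-hilb}, where the construction of $T_{L}$ via the Gram--Schmidt procedure over the possibly infinitely many finite-dimensional spectral projectors is the delicate step; once that lemma is in hand, the present proposition reduces to a routine application of Berry--Esseen and a Taylor expansion.
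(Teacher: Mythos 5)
Your proposal is correct and follows essentially the same route as the paper's own proof: the same choice of threshold $\log_{2}L$, the same application of Lemma~\ref{lem:ke-li-sep-hilb} to obtain a feasible test with $\operatorname{Tr}\{T_{L}\sigma^{\otimes n}\}\leq 1/L$, the same Berry--Esseen step on the centered variables $Z_{i}-D(\rho\Vert\sigma)$, and the same first-order expansion of $\Phi^{-1}$ (the paper phrases it via Lagrange's mean value theorem) for the $O(1)$ term. No gaps to report.
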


\begin{proof}
Applying the Berry--Esseen theorem to the random sequence $Z_{1}-D(\rho
\Vert\sigma)$, \ldots, $Z_{n}-D(\rho\Vert\sigma)$, with $Z_{i}$ defined in
\eqref{eq:rand-Z-sequence}--\eqref{eq:rand-Z-sequence-2}, we find that%
\begin{equation}
\left\vert \Pr\left\{  \overline{Z^{n}}\sqrt{\frac{n}{V(\rho\Vert\sigma)}}\leq
x\right\}  -\Phi(x)\right\vert \leq C\cdot T(\rho\Vert\sigma)/\sqrt{n\left[
V(\rho\Vert\sigma)\right]  ^{3}},
\end{equation}
where $\overline{Z^{n}}\equiv\frac{1}{n}\sum_{i=1}^{n}\left[  Z_{i}%
-D(\rho\Vert\sigma)\right]  $, which implies that%
\begin{equation}
\Pr\left\{  \sum_{i=1}^{n}Z_{i}\leq nD(\rho\Vert\sigma)+x\sqrt{nV(\rho
\Vert\sigma)}\right\}  \leq\Phi(x)+C\cdot T(\rho\Vert\sigma)/\sqrt{n\left[
V(\rho\Vert\sigma)\right]  ^{3}}.
\end{equation}
Picking $x=\Phi^{-1}\!\left(  \varepsilon-C\cdot T(\rho\Vert\sigma
)/\sqrt{n\left[  V(\rho\Vert\sigma)\right]  ^{3}}\right)  $, this becomes%
\begin{equation}
\Pr\left\{  \sum_{i=1}^{n}Z_{i}\leq nD(\rho\Vert\sigma)+\sqrt{nV(\rho
\Vert\sigma)}\Phi^{-1}\!\left(  \varepsilon-C\cdot T(\rho\Vert\sigma
)/\sqrt{n\left[  V(\rho\Vert\sigma)\right]  ^{3}}\right)  \right\}
\leq\varepsilon.
\end{equation}
Choosing $L$ such that
\begin{equation}
\log_{2}L=nD(\rho\Vert\sigma)+\sqrt{nV(\rho\Vert\sigma)}\Phi^{-1}\!\left(
\varepsilon-C\cdot T(\rho\Vert\sigma)/\sqrt{n\left[  V(\rho\Vert
\sigma)\right]  ^{3}}\right)
\end{equation}
and applying Lemma~\ref{lem:ke-li-sep-hilb}, we find that%
\[
\operatorname{Tr}\{T^{n}\rho^{\otimes n}\}\geq\Pr\left\{  \sum_{i=1}^{n}%
Z_{i}\geq\log_{2}L\right\}  =1-\Pr\left\{  \sum_{i=1}^{n}Z_{i}\leq\log
_{2}L\right\}  \geq1-\varepsilon,
\]
while%
\begin{equation}
\operatorname{Tr}\{T^{n}\sigma^{\otimes n}\}\leq\frac{1}{L}=e^{-\left[
nD(\rho\Vert\sigma)+\sqrt{nV(\rho\Vert\sigma)}\Phi^{-1}\!\left(
\varepsilon-C\cdot T(\rho\Vert\sigma)/\sqrt{n\left[  V(\rho\Vert
\sigma)\right]  ^{3}}\right)  \right]  }.
\end{equation}
This implies that%
\begin{equation}
-\log_{2}\operatorname{Tr}\{T^{n}\sigma^{\otimes n}\}\geq nD(\rho\Vert
\sigma)+\sqrt{nV(\rho\Vert\sigma)}\Phi^{-1}\!\left(  \varepsilon-C\cdot
T(\rho\Vert\sigma)/\sqrt{n\left[  V(\rho\Vert\sigma)\right]  ^{3}}\right)  .
\end{equation}
Since $D_{H}^{\varepsilon}(\rho^{\otimes n}\Vert\sigma^{\otimes n})$ involves
an optimization over all possible measurement operators $T^{n}$ satisfying
$\operatorname{Tr}\{T^{n}\rho^{\otimes n}\}\geq1-\varepsilon$, we conclude the
bound in \eqref{eq:2nd-order-expansion-final}. The equality after
\eqref{eq:2nd-order-expansion-final} follows from expanding $\Phi^{-1}$ at the
point $\varepsilon$ using Lagrange's mean value theorem.
\end{proof}

\bigskip\noindent\textbf{Ethics statement}. This work did not involve any
active collection of human data, and it did not involve animals.

\bigskip\noindent\textbf{Data accessibility statement}. This work does not
have any experimental data.

\bigskip\noindent\textbf{Competing interests statement}. We have no competing interests.

\bigskip\noindent\textbf{Authors' contributions}. All authors contributed
equally toward proving the main results and writing the paper.

\bigskip\noindent\textbf{Acknowledgements}. MMW\ is grateful to SM\ for
hosting him for a research visit to University of Camerino in June and July,
2016, during which this research project was initiated. SM is grateful to MMW
for hosting him for a research visit to Louisiana State University at Baton
Rouge in March, 2018, during which this research project was completed.

\bigskip\noindent\textbf{Funding}. MMW acknowledges support from the US
National Science Foundation under grant no.~1714215.

\bibliographystyle{unsrt}
\bibliography{Ref}

\end{document}